\documentclass[10pt]{article}
\usepackage{lettrine}
\usepackage[
  style=numeric-comp,
  sorting=none,
  sortcites=true,
  maxbibnames=99,
  giveninits=true,
  doi=true,
  url=true,
  isbn=false,
  eprint=true
]{biblatex}
\DeclareFieldFormat{doi}{\href{https://doi.org/#1}{\nolinkurl{https://doi.org/#1}}}
\usepackage[svgnames]{xcolor}
\usepackage{soul}
\usepackage{color}
\usepackage[breakable]{tcolorbox}
\usepackage{amssymb}
\usepackage{bm}
\usepackage{parskip} 
\usepackage{iftex}
\usepackage{framed}
\usepackage{tikz}
\usetikzlibrary{calc}
\ifPDFTeX
    \usepackage[T1]{fontenc}
    \usepackage{mathpazo}
\else
    \usepackage{fontspec}
\fi
\usepackage{fancyhdr}
\usepackage{caption}
\usepackage{subcaption}
\usepackage{float}
\usepackage{xcolor} 
\usepackage{enumerate} 
\usepackage{geometry} 
\usepackage{amssymb} 
\usepackage{textcomp} 
\usepackage{amsmath, amssymb, amsfonts}
\usepackage{amsthm}
\usepackage{graphicx}
\usepackage{listings}
\usepackage[hyperfootnotes=false]{hyperref}
\usepackage{color}
\usepackage{booktabs}   
\usepackage{makecell}
\usepackage{multirow}   
\definecolor{gray}{rgb}{0.5,0.5,0.5}
\AtBeginDocument{%
}
\usepackage{upquote} 
\usepackage{eurosym} 
\usepackage{fancyvrb} 
\usepackage{grffile} 
\makeatletter
\@ifpackagelater{grffile}{2019/11/01}
{}
{
  \def\Gread@@xetex#1{%
    \IfFileExists{"\Gin@base".bb}%
    {\Gread@eps{\Gin@base.bb}}%
    {\Gread@@xetex@aux#1}%
  }
}
\makeatother
\usepackage[Export]{adjustbox} 
\adjustboxset{max size={0.9\linewidth}{0.9\paperheight}}

\usepackage{titling}
\usepackage{longtable} 
\usepackage{booktabs}  
\usepackage[inline]{enumitem} 
\usepackage[normalem]{ulem} 
\usepackage{mathrsfs}

\usepackage{authblk}

\newlength{\affildateskip}
\makeatletter
\renewcommand\@maketitle{%
  \newpage
  \null
  \vskip 2em%
  \begin{center}%
  \let \footnote \thanks
  {\LARGE \@title \par}%
  \vskip 1.5em%
  {\large
   \lineskip .5em%
   \begin{tabular}[t]{c}%
     \@author
   \end{tabular}\par}%
  \vskip\affildateskip
  {\large \@date}%
  \end{center}%
  \par
  \vskip 1.5em}
\makeatother
\usepackage{chngcntr}                  
\usepackage{xcolor}
\usepackage{amssymb}
\newcommand{\benchpass}{\textcolor{green!50!black}{\checkmark}}
\newcommand{\benchfail}{\textcolor{red}{\textbf{\texttimes}}}
\newtheorem{theorem}{Theorem}[section]
\newtheorem{proposition}[theorem]{Proposition}

\theoremstyle{definition}

\definecolor{urlcolor}{rgb}{0,.145,.698}
\definecolor{linkcolor}{rgb}{.71,0.21,0.01}
\definecolor{citecolor}{rgb}{.12,.54,.11}
\hypersetup{
    breaklinks=true,
    colorlinks=true,
    urlcolor=urlcolor,
    linkcolor=linkcolor,
    citecolor=citecolor,
}

\definecolor{ansi-black}{HTML}{3E424D}
\definecolor{ansi-black-intense}{HTML}{282C36}
\definecolor{ansi-red}{HTML}{E75C58}
\definecolor{ansi-red-intense}{HTML}{B22B31}
\definecolor{ansi-green}{HTML}{00A250}
\definecolor{ansi-green-intense}{HTML}{007427}
\definecolor{ansi-yellow}{HTML}{DDB62B}
\definecolor{ansi-yellow-intense}{HTML}{B27D12}
\definecolor{ansi-blue}{HTML}{208FFB}
\definecolor{ansi-blue-intense}{HTML}{0065CA}
\definecolor{ansi-magenta}{HTML}{D160C4}
\definecolor{ansi-magenta-intense}{HTML}{A03196}
\definecolor{ansi-cyan}{HTML}{60C6C8}
\definecolor{ansi-cyan-intense}{HTML}{258F8F}
\definecolor{ansi-white}{HTML}{C5C1B4}
\definecolor{ansi-white-intense}{HTML}{A1A6B2}
\definecolor{ansi-default-inverse-fg}{HTML}{FFFFFF}
\definecolor{ansi-default-inverse-bg}{HTML}{000000}
\definecolor{outerrorbackground}{HTML}{FFDFDF}
\definecolor{darkgreen}{rgb}{0.0, 0.5, 0.0}

\DefineVerbatimEnvironment{Highlighting}{Verbatim}{commandchars=\\\{\}}

\makeatletter
\newcommand{\boxspacing}{\kern\kvtcb@left@rule\kern\kvtcb@boxsep}
\makeatother

\title{Scalable dynamical inference of phase-field fracture from sparse and partial measurements}
\author[1]{Hanfeng Zhai\thanks{Corresponding author. E-mail: \tt hzhai@stanford.edu}}
\author[2]{Zisheng Zhang\thanks{E-mail: \tt zishengz@stanford.edu}}
\affil[1]{Department of Mechanical Engineering, Stanford University, Stanford, CA 94305, USA}
\affil[2]{Department of Chemical Engineering, Stanford University, Stanford, CA 94305, USA}

\date{\today}

\begin{document}
\maketitle

\begin{abstract}

In structural health monitoring and fracture assessment, evolving crack fields must often be inferred from sparse mechanical measurements rather than dense full-field observations.
We develop a convolutional-recurrent framework (CNN2D--ConvGRU) for measurement-conditioned reconstruction of time-dependent phase-field brittle fracture.
The framework learns an approximate, observation-conditioned update map for the regularized fracture evolution.
At each load step, it maps a fixed-length history of phase and displacement fields, together with sparse displacement measurements at the current step, to the current full-field phase and displacement state.
During sequential deployment, new measurements are assimilated at every load step. The framework therefore serves as a state-inference surrogate rather than an autonomous time integrator.
%
%
Across the held-out cases, the reconstructed fields reproduce the crack paths, damage evolution, and bulk displacement response.
The largest discrepancies remain localized near propagating crack tips and steep displacement gradients, while some drift accumulates during late-stage sequential reconstruction.
The trained weights are further evaluated without retraining on a $512 \times 512$ raster after training at $256 \times 256$, using the same physical domain and finite-element discretization with a proportionally refined measurement grid.
For the tested crack configurations, this empirical raster-and-sensing transfer preserves the principal damage topology and global damage evolution, although fine-scale displacement errors increase near crack tips.
Controlled comparisons with alternative spatial and temporal architectures show that CNN2D--ConvGRU provides a favorable balance between reconstruction accuracy and computational cost for the problem considered.
Compared with repeated finite-element solutions, sequential full-field reconstruction achieves mean speedups of $175\times$ on CPU and $253\times$ on GPU.
These results demonstrate that convolutional sequence learning can efficiently reconstruct evolving full-field fracture states from sparse and partial observations while retaining the spatial structure and history dependence of phase-field fracture.
\vspace{8pt}

\noindent\textbf{Keywords:} Phase-field fracture; structural health monitoring; sparse sensing; machine learning; crack propagation; computational mechanics
\end{abstract}

\clearpage

\section{Introduction}\label{sec:introduction}

Failure and fracture govern stiffness loss and ultimate load capacity across structural alloys, ceramics, composites, and thin-film systems.
Predicting crack nucleation, propagation, and coalescence therefore remains a central challenge in solid mechanics because failure is often controlled by evolving discontinuities rather than by bulk strength alone \cite{griffith1921,irwin1957,rice1968,dugdale1960,barenblatt1962}.
Crack growth is a free-discontinuity problem: paths may branch or merge, local tip fields couple to global constraint, and heterogeneous or dynamic settings add further history dependence \cite{rice1968,spatschek2011,bouchbinder2014}.
Atomistic and atomistic-to-continuum models resolve bond breaking and tip-scale mechanisms but remain costly except in localized nonlinear zones \cite{buehler2006,tadmor1996,shenoy1999}.

At the engineering scale, finite element methods (FEM) dominate continuum analysis.
Explicit crack tracking, cohesive-zone laws with prescribed paths, and extended finite element enrichments each carry practical limitations when topology evolves \cite{xu1994,moes1999}.
Phase-field fracture regularizes Griffith's energetic picture by replacing sharp cracks with a smooth damage-like field on a fixed mesh \cite{francfort1998,bourdin2000,bourdin2008}.
Governing equations follow from stationarity of a degraded elastic energy plus a fracture-surface functional, with irreversibility enforced to prevent healing \cite{bourdin2007,miehe2010a,miehe2010b}.
The same representation handles branching, merging, and coalescence without remeshing \cite{karma2004,spatschek2011,miehe2010a,ambati2015}.
The variational framework also extends to coupled problems such as hydraulic fracture, hydrogen-assisted cracking, and fatigue \cite{lee2017,martinez2018,carrara2019fatigue}.
In practice, the coupled displacement--phase-field problem is solved with staggered finite element updates on a fixed background mesh \cite{miehe2010a,miehe2010b}.
Irreversibility and history-dependent internal variables make crack evolution path dependent, so each load increment requires a constrained nonlinear solve rather than a single static boundary-value problem \cite{gerasimov2019,miehe2010a}.
Because cracks are represented implicitly, the method scales to interacting two- and three-dimensional networks without remeshing when branches merge or coalesce \cite{karma2004,ambati2015}.
These features have driven adoption in problems where topology changes, constraint effects, and coupled driving fields must be treated together.

These strengths come with substantial computational cost.
The regularization length must be well resolved, which demands fine meshes near diffuse crack bands and large nonlinear systems on long load paths \cite{ambati2015,bourdin2007}.
Irreversible, path-dependent evolution requires repeated staggered or monolithic solves over many load increments \cite{miehe2010a,gerasimov2019}.
Predictions can be sensitive to degradation, tension--compression splitting, and nucleation modeling \cite{tanne2018,sargado2017,kumar2020}.
Adaptive mesh refinement concentrates degrees of freedom near evolving crack bands \cite{heister2015,Hirshikesh2019}.
Hybrid and operator-split schemes improve robustness under compression while retaining staggered efficiency \cite{ambati2015,miehe2010a}.
Revised degradation laws and strength-aware variants sharpen nucleation and peak-load predictions \cite{sargado2017,tanne2018,kumar2020}.
Monolithic and quasi-Newton solvers with adaptive load stepping further accelerate difficult fracture simulations \cite{farrell2017,kristensen2020}.
However, existing phase-field fracture methods remain too computationally expensive for applications requiring repeated, time-dependent predictions, creating a need for an efficient surrogate capable of accurately reconstructing evolving crack fields.

Therefore, we develop a surrogate model for time-dependent crack propagation.
Data-driven models learn approximate time-update maps from simulation data without replacing the underlying physics outright.
A common aim is to approximate an expensive per-step operator or a reduced state manifold learned from high-fidelity trajectories.
In monitoring and measurement settings, an equally pressing need is to recover the evolving full-field fracture state when only sparse or partial sensors are available at the current step.
Autoencoder-based nonlinear model reduction projects high-dimensional fields onto compact latent coordinates \cite{lee2020}.
Physics-informed neural networks embed governing-equation residuals and constraints directly in the training objective \cite{raissi2019, zhai2022predicting}.
Convolutional encoder--decoder architectures predict and evolve gridded state fields, exploiting local spatial correlations and weight sharing \cite{geneva2020,lecun1998}.
Graph neural networks operate on unstructured mesh connectivity and can learn dynamics directly on simulation graphs \cite{pfaff2021, zhai2025}.
Within computational mechanics, related ideas appear in field forecasting for multiphase flow, graph surrogates for polycrystal plasticity, and operator-learning approaches to brittle fracture \cite{wen2021,zhai2025,goswami2022}.
Phase-field simulations are a natural match for this grid-based view because the damage and displacement components can be rasterized as multichannel images.
A time-dependent fracture run then becomes a measurement-conditioned inference problem: given past field snapshots and sparse sensors at the current step, reconstruct the current full-field state.
Convolutional long short-term memory (ConvLSTM) networks introduced spatiotemporal recurrence with convolutional state transitions \cite{shi2015}, and convolutional gated recurrent units (ConvGRU) provide a closely related gated update with a reduced parameter count \cite{ballas2016}.
Both architectures preserve spatial layout during temporal propagation, in contrast to recurrent models that flatten the domain before each step.

Here we propose a fully convolutional CNN2D--ConvGRU model trained on FEniCS phase-field trajectories for dynamical inference of fracture fields under sparse sensing.
Two-dimensional convolutions extract multiscale features from each frame; a ConvGRU module advances these latent fields with temporal memory.
We train separate networks for $\phi$, $u_x$, and $u_y$ using the past-history input $\mathcal{T}$ ($10$ frames).
At inference, the model receives coarse $32\times32$ sparse displacement samples and a masked phase field at the current step, reflecting the limited observations typical of structural monitoring rather than dense full-field data.
The network learns a measurement-conditioned reconstruction of the current fields rather than an autonomous time-step map that advances without current-step observations.
All learned layers are convolutional, so the same weights can be evaluated at different resolutions.
Resolution transfer is treated as an empirical test rather than a theoretical invariance.
Mask-weighted losses emphasize crack bands; test-set simulations check generalization beyond the training ensemble.
%

The paper is organized as follows.
Section~\ref{sec:methods} presents the variational phase-field reference model, the simulation ensemble and sparse sensing setup, and the CNN2D--ConvGRU model with training protocol.
Section~\ref{sec:results} reports current-step reconstruction accuracy and parity on the report set.
Section~\ref{sec:deployment} applies sequential measurement-conditioned reconstruction on report-set configurations.
Section~\ref{sec:scaling} examines sequential deployment on a finer raster representation of the same physical domain without retraining.
Section~\ref{sec:discussions} discusses raster-resolution transfer, displacement versus damage accuracy, and deployment limitations.
Section~\ref{sec:conclusions} summarizes the main findings and outlook.
Appendix~\ref{sec:app_pf_sim} tabulates phase-field simulation and data-set parameters, and Appendix~\ref{sec:app_ml_train} lists network architecture and loss hyperparameters.

\section{Methods}\label{sec:methods}

\subsection{Computational fracture mechanics}

We generate the high-fidelity training data with a variational phase-field model of brittle fracture.
A sharp crack is regularized into a continuous damage-like field $\phi(\mathbf{x},t)\in[0,1]$, where $\phi=0$ denotes intact material and $\phi=1$ a fully developed crack \cite{francfort1998,miehe2010a}.
The coupled problem is posed on a two-dimensional square domain $\Omega=[-0.5,0.5]^2$ and solved with FEniCS \cite{fenics}.
We use continuous piecewise-linear (CG1) Lagrange elements for both the displacement $\mathbf{u}$ and the phase field $\phi$.
We interpret the unit square as a physical specimen of size $1\,\mathrm{m}\times1\,\mathrm{m}$ ($L=1\,\mathrm{m}$).
Displacements in mm and the diffuse-band length scales below are referenced to this size.
The FEniCS mesh uses $300$ cells per side ($\approx 3.3\,\mathrm{mm}$ elements).
This is the solver mesh, not the later ML raster ($256\times256$ or $512\times512$).
The regularization length $\ell=10^{-2}$ is $10\,\mathrm{mm}$, so a diffuse crack band spans a few elements.
The total energy combines an elastic contribution degraded by damage and a fracture surface energy of the Ambrosio--Tortorelli (AT2) type,
\begin{equation}
    \Psi(\mathbf{u},\phi)
    = \int_{\Omega} g(\phi)\,\psi(\mathbf{u})\,\mathrm{d}\Omega
    + G_c \int_{\Omega} \left( \frac{1}{2\ell}\,\phi^2 + \frac{\ell}{2}\,|\nabla\phi|^2 \right) \mathrm{d}\Omega ,
    \label{eq:total_energy}
\end{equation}
where $G_c$ is the critical fracture energy, $\ell$ is the regularization length scale that controls the width of the diffuse crack band, and $g(\phi)=(1-\phi)^2$ is the quadratic degradation function used in the finite-element implementation.

The material is modeled as linear isotropic elastic. The infinitesimal strain and Cauchy stress are
\begin{equation}
    \boldsymbol{\varepsilon}(\mathbf{u}) = \tfrac{1}{2}\!\left(\nabla\mathbf{u} + \nabla\mathbf{u}^{\top}\right),
    \qquad
    \boldsymbol{\sigma}(\mathbf{u}) = 2\mu\,\boldsymbol{\varepsilon} + \lambda\,\mathrm{tr}(\boldsymbol{\varepsilon})\,\mathbf{I},
    \label{eq:constitutive}
\end{equation}
with Lam\'e parameters $\lambda$ and $\mu$. To suppress damage driven by negative volumetric strain, the crack-driving energy uses a volumetric--deviatoric split with only the positive volumetric contribution retained,
\begin{equation}
    \psi(\mathbf{u}) = \tfrac{1}{2}(\lambda+\mu)\,\langle \mathrm{tr}\,\boldsymbol{\varepsilon}\rangle_{+}^{2}
    + \mu\,\mathrm{dev}(\boldsymbol{\varepsilon})\!:\!\mathrm{dev}(\boldsymbol{\varepsilon}),
    \label{eq:psi_split}
\end{equation}
where $\langle x\rangle_{+}=\tfrac{1}{2}(x+|x|)$ denotes the positive part and $\mathrm{dev}(\cdot)$ the deviatoric operator.
Irreversibility is enforced through a history field that records the maximum strain energy density $\psi(\mathbf{u})$ from Eq.~\eqref{eq:psi_split} over the loading history,
\begin{equation}
    H(\mathbf{x},t) = \max_{\tau \le t} \psi\big(\mathbf{u}(\mathbf{x},\tau)\big),
    \label{eq:history}
\end{equation}
which is the standard history-field approximation used to suppress crack healing in phase-field fracture calculations.
We do not claim that this construction is mathematically equivalent to the pointwise constraint $\phi_{n+1}\ge\phi_n$ or to an explicit projection $\phi_{n+1}=\max(\phi_n,\widetilde{\phi}_{n+1})$.

We use a hybrid history-field phase-field formulation.
Mechanical equilibrium is solved using the degraded elastic stress.
Crack evolution is driven by the nondecreasing history field in Eq.~\eqref{eq:history}.
Because the history-field substitution is not derived here from a single constrained incremental potential, Eqs.~\eqref{eq:weak_u}--\eqref{eq:weak_phi} are the governing equations of the finite-element implementation.
They are not presented as simultaneous Euler--Lagrange equations of Eq.~\eqref{eq:total_energy}.
In weak form, with $\boldsymbol{\sigma}(\mathbf{u})$ from Eq.~\eqref{eq:constitutive} and $g(\phi)=(1-\phi)^2$, the displacement subproblem reads: find $\mathbf{u}$ such that
\begin{equation}
    \int_{\Omega} g(\phi)\,\boldsymbol{\sigma}(\mathbf{u}):\boldsymbol{\varepsilon}(\mathbf{v})\,\mathrm{d}\Omega = 0
    \qquad \forall\,\mathbf{v},
    \label{eq:weak_u}
\end{equation}
while the phase-field subproblem reads: find $\phi$ such that
\begin{equation}
    \int_{\Omega} \Big[ G_c\,\ell\,\nabla\phi\!\cdot\!\nabla q
    + \big(\tfrac{G_c}{\ell} + 2H\big)\,\phi\,q
    - 2H\,q \Big]\,\mathrm{d}\Omega = 0
    \qquad \forall\,q .
    \label{eq:weak_phi}
\end{equation}
Let $\Gamma_{\mathrm{bottom}}$ and $\Gamma_{\mathrm{top}}$ denote the bottom
and top boundaries. At load step $t$, the displacement belongs to the affine
space
\[
\mathcal{V}_t
=
\left\{
\mathbf{u}\in[H^1(\Omega)]^2:
\mathbf{u}=\mathbf{0}
\ \text{on }\Gamma_{\mathrm{bottom}},
\quad
u_y=t\,u_r
\ \text{on }\Gamma_{\mathrm{top}}
\right\},
\]
while the displacement test functions belong to
\[
\mathcal{V}_0
=
\left\{
\mathbf{v}\in[H^1(\Omega)]^2:
\mathbf{v}=\mathbf{0}
\ \text{on }\Gamma_{\mathrm{bottom}},
\quad
v_y=0
\ \text{on }\Gamma_{\mathrm{top}}
\right\}.
\]
At each quasi-static load step, we seek
$\mathbf{u}(\cdot,t)\in\mathcal{V}_t$ and
$\phi(\cdot,t)\in H^1(\Omega)$, with Eq.~\eqref{eq:weak_u} tested against all
$\mathbf{v}\in\mathcal{V}_0$.

At each load increment, the displacement and phase-field subproblems are solved successively while the coupling fields are held fixed.
After each displacement solve, the history field is updated, and the staggered iteration continues until the changes in both fields fall below the prescribed tolerance.
This fixed-point staggered procedure is used to generate the finite-element reference trajectories.
Because the history-field formulation is not written here as a constrained minimization of a single common incremental energy, we do not claim global convergence or monotone descent of the total energy in Eq.~\eqref{eq:total_energy}.
The implemented $L^2$ tolerance of $10^{-3}$ terminates the staggered iteration in practice when successive field updates become sufficiently small.

Loading is applied as monotonic uniaxial tension.
The bottom edge is fully clamped, $\mathbf{u}=\mathbf{0}$.
The vertical displacement on the top edge is ramped as $u_y = t\,u_r$, with the remaining components left traction-free.
The pseudo-time increment $\Delta t$ is adaptive and is reduced when the staggered iteration converges slowly.
We take $u_r=5\times10^{-3}\,\mathrm{m}=5\,\mathrm{mm}$ (top-edge displacement at $t=1$) and a base step $\Delta t=5\times10^{-2}$, reduced to $10^{-3}$ after the first increment and further near instabilities.
In the steady regime each converged frame advances the applied top displacement by $\Delta u_y=\Delta t\,u_r\approx5\times10^{-3}\,\mathrm{mm}$ ($5\,\mu\mathrm{m}$); the first step is $0.25\,\mathrm{mm}$.
Over the full history the top edge reaches $u_y=t\,u_r=2.5\,\mathrm{mm}$, a nominal strain $\varepsilon=u_y/L=0.25\%$.
Because the model is quasi-static, this load parameter, rather than a physical clock, sets the time axis.
Pre-existing cracks are imposed as Dirichlet constraints $\phi=1$ on the prescribed crack subdomains.
At each converged step we record a scalar top-boundary force diagnostic
\begin{equation}
    F_y = \int_{\Gamma_{\mathrm{top}}}
    \bigl[\boldsymbol{\sigma}(\mathbf{u})\,\mathbf{n}\bigr]_y\,\mathrm{d}s ,
    \label{eq:reaction}
\end{equation}
computed from the undegraded Cauchy traction of the staggered displacement field.
This definition matches the finite-element logging used to build Fig.~\ref{fig:schematic_Fu}(b) and the termination rule $F_y<100$ for ten consecutive converged steps.
It is \emph{not} the variational reaction associated with the degraded equilibrium form Eq.~\eqref{eq:weak_u}, which would instead use $g(\phi)\,\boldsymbol{\sigma}(\mathbf{u})\,\mathbf{n}$.
We therefore treat $F_y$ only as an implementation-defined load-path monitor and stopping criterion, not as the dual force of Eq.~\eqref{eq:weak_u}.
The mesh resolution, material constants ($G_c$, $\ell$, $\lambda$, $\mu$), loading ramp, solver tolerances, and randomized crack-ensemble rules used to build the training corpus are listed in Appendix~\ref{sec:app_pf_sim} (Table~\ref{tab:pf_sim}).
Figure~\ref{fig:schematic_Fu} summarizes the benchmark geometry, loading protocol, and the resulting force-diagnostic--displacement curve from Eq.~\eqref{eq:reaction}.

\begin{figure}[htbp]
    \centering
    \begin{subfigure}{.35\linewidth}
        \centering
        \includegraphics[width=\linewidth]{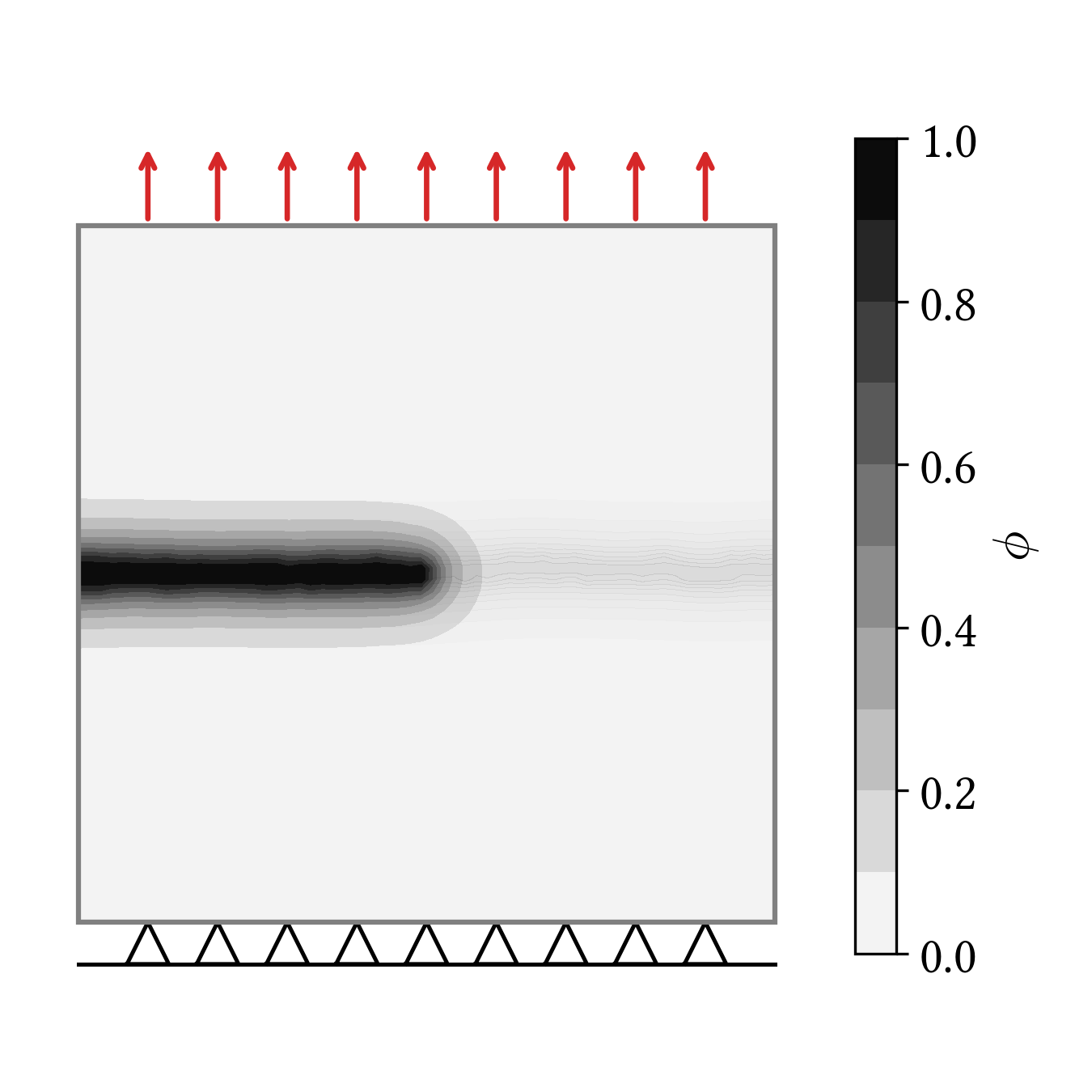}
        \label{fig:schematic}\caption{}
    \end{subfigure}
    \begin{subfigure}{.43\linewidth}
        \centering
        \includegraphics[width=\linewidth]{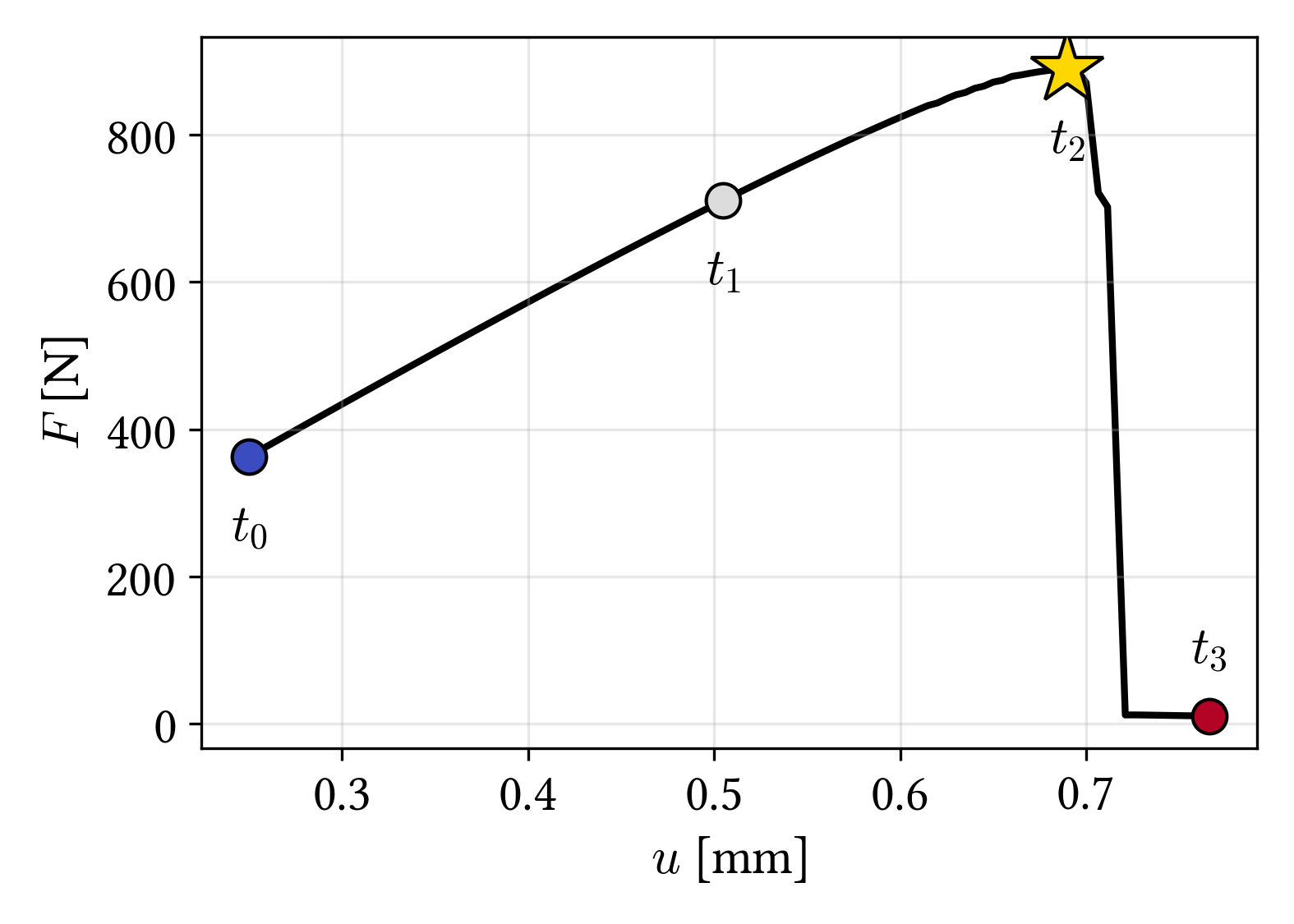}
        \label{fig:Fu_curve}\caption{}
    \end{subfigure}
    \caption{Phase-field fracture benchmark setup and implementation-defined load-path diagnostic used to generate the training trajectories. (a) Computational domain and loading protocol (schematic overlay, not a single saved snapshot); (b) top-boundary force diagnostic from Eq.~\eqref{eq:reaction} versus applied displacement, with visualization markers $t_0$--$t_3$.}
    \label{fig:schematic_Fu}
\end{figure}

In Figure~\ref{fig:schematic_Fu}(a), pre-existing cracks and the phase-field damage are drawn together in one schematic to illustrate the setup; each converged save in the computation stores a single $\phi$ field, not a composite image.
Figure~\ref{fig:schematic_Fu}(b) plots the force diagnostic from Eq.~\eqref{eq:reaction} versus pseudo-time, equivalently the applied top displacement $u_y=t\,u_r\in[0,2.5]\,\mathrm{mm}$.
The marked instants $t_0$, $t_1$, $t_2$, and $t_3$ are four loading steps chosen along that curve; Fig.~\ref{fig:fields} shows the full $\phi$, $u_x$, and $u_y$ fields at exactly those same instants for the same representative simulation.

To build a diverse data set, we generate $100$ independent fracture trajectories by randomizing the initial crack population according to the ensemble rules in Table~\ref{tab:pf_sim}.
Each simulation is run independently with the staggered phase-field solver described above, producing a full spatiotemporal history of crack nucleation, propagation, branching, and coalescence under increasing tension.

The finite element fields are exported at every converged load step and interpolated onto a uniform Cartesian grid for convolutional learning.
Each trajectory stores $\phi$ of shape $(T,H,W)$ and $(u_x,u_y)$ of shape $(T,2,H,W)$, where $T$ is the number of retained time steps.
We keep the final portion of each history, where fracture dynamics are most active.
The ensemble is split by trajectory into $75$ training and $25$ held-out report trajectories (Appendix Table~\ref{tab:ml_train}).
Report trajectories are not used for parameter updates.
Reported $R^2$ values, checkpoints, and model-selection decisions use this held-out set.

\begin{figure}[htbp]
    \centering
    \includegraphics[width=.6\linewidth]{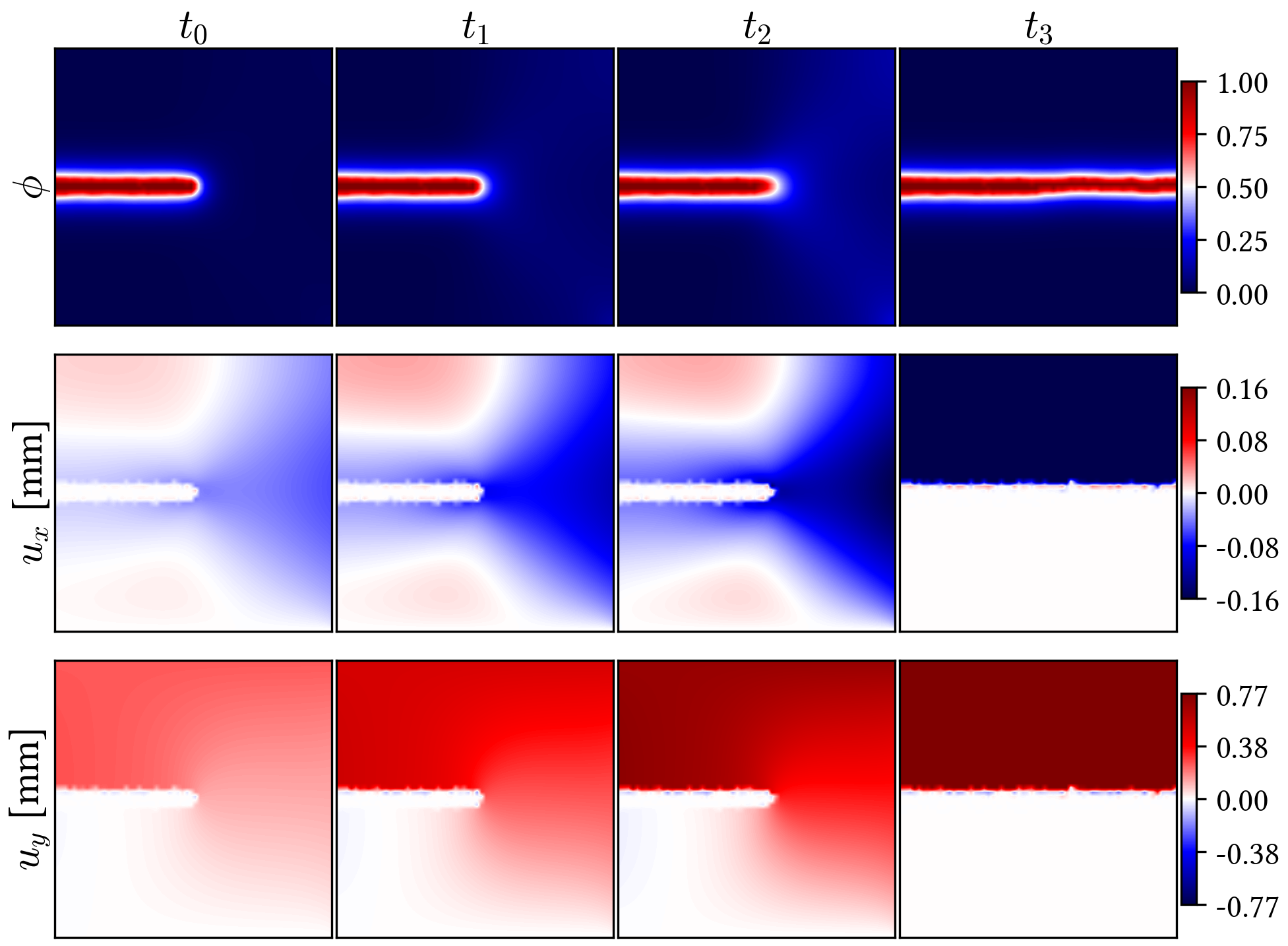}
    \caption{Representative snapshots of $\phi$, $u_x$, and $u_y$ at $t_0$--$t_3$ (marked in Figure~\ref{fig:schematic_Fu}(b)) for the same reference simulation. These spatial fields are stored as image-like channels for surrogate training.}
    \label{fig:fields}
\end{figure}

Figure~\ref{fig:fields} matches Fig.~\ref{fig:schematic_Fu} one to one.
The four columns are the fields at $t_0$, $t_1$, $t_2$, and $t_3$ on the force-diagnostic--displacement curve in Figure~\ref{fig:schematic_Fu}(b).
Column $k$ shows $\phi$, $u_x$, and $u_y$ at pseudo-time $t_k$.
Training stacks use the same per-step raster data, not the illustrative overlay in Figure~\ref{fig:schematic_Fu}(a).

\subsection{Machine learning}

We denote the rasterized state at load step $t$ by $X_t=(\phi(\cdot,t),\mathbf{u}(\cdot,t))$.
At each load step, the network reconstructs the current full-field phase and displacement state from a finite history of prior states and sparse displacement measurements available at that step.
No global regularity assumption is imposed on the underlying fracture evolution, which may be strongly nonlinear near crack initiation, branching, and coalescence.

We cast measurement-conditioned state inference as an image-based spatiotemporal regression problem. The rasterized fields are treated as multi-channel images, and the network learns a history-and-sensor-conditioned reconstruction of the \emph{current} fields.
We denote by $\mathcal{T}$ the past-history input: the $10$ consecutive prior field states $(\phi,u_x,u_y)$ at the native grid resolution (Fig.~\ref{fig:input_output}).
Given $\mathcal{T}$ and sparse measurements at the current step, the network reconstructs the current phase and displacement fields.
In sequential deployment the same reconstruction is repeated at every load step with fresh sparse measurements; the model is not an autonomous time integrator that advances without current-step sensors.

\begin{figure}[htbp]
    \centering
    \includegraphics[width=.65\linewidth]{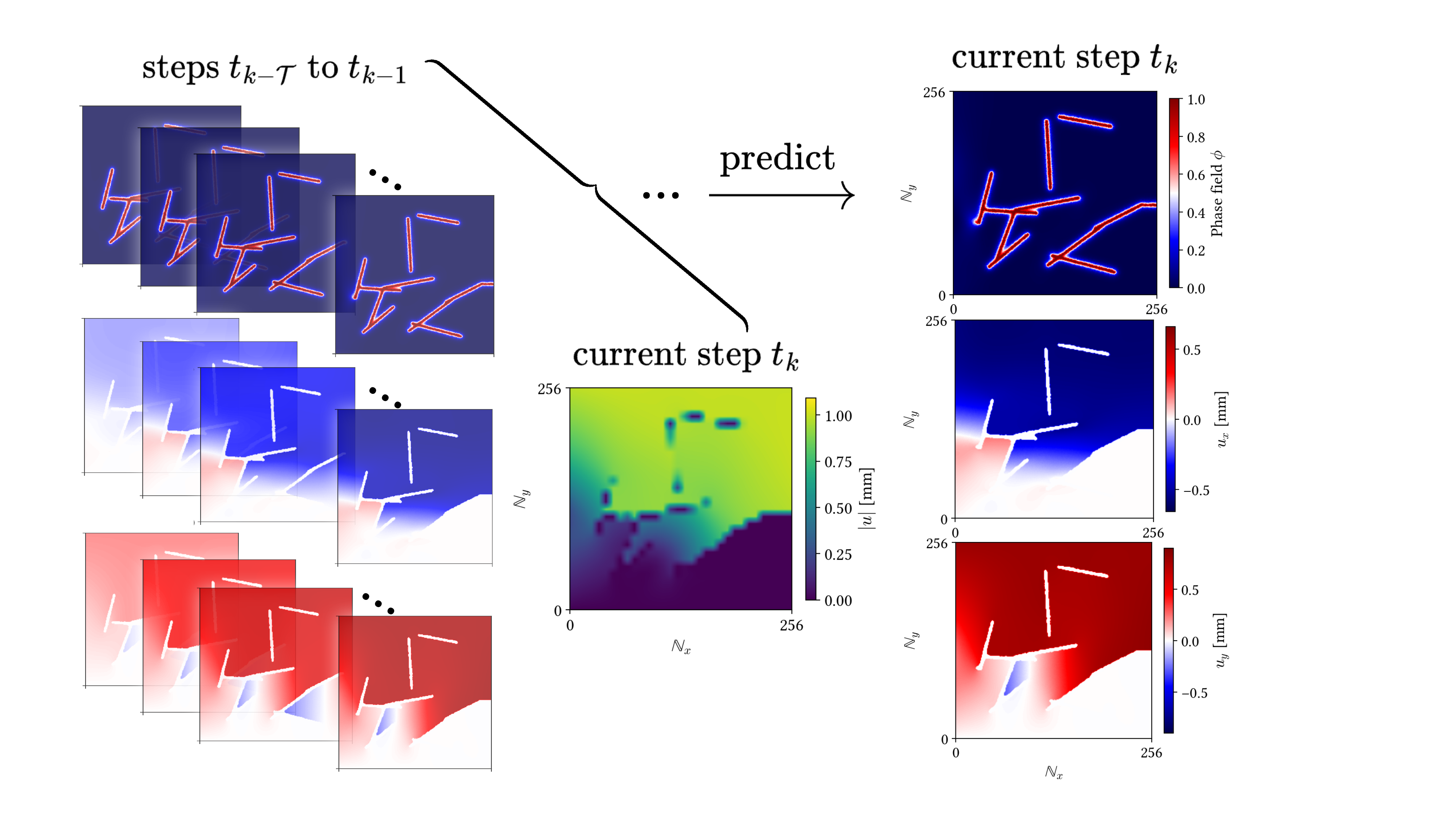}
    \caption{Spatiotemporal inference workflow. The past-history input $\mathcal{T}$ comprises prior $\phi$, $u_x$, and $u_y$ fields; together with sparse current-step displacement measurements, the CNN2D--ConvGRU model reconstructs the current full-field phase and displacement state.}
    \label{fig:input_output}
\end{figure}

Figure~\ref{fig:input_output} makes the information regime of the surrogate explicit.
The left block is the fixed-length past window $\mathcal{T}$ of full-field states, while the right block is the current step, where only a coarse displacement measurement is known and the present phase channel is withheld.
The network therefore cannot copy $\phi$ from the current frame; it must infer the current damage and displacement fields jointly from history and sparse sensing.
This layout is the same one used later for sequential deployment, with $\mathcal{T}$ rolling forward after each reconstruction.

\paragraph{Sparse and partial inputs.}
The displacement is supplied only through a coarse measurement.
At each step, the full-resolution displacement is downsampled to a $32\times32$ sensor grid and bilinearly resampled to the working resolution.
This mimics a sparse sensor array rather than dense field knowledge.
The model receives $\mathcal{T}$ and this sparse current-step sensing, while the current phase channel is masked out.
It must therefore reconstruct the complete current $\phi$, $u_x$, and $u_y$ from $\mathcal{T}$ and partial present measurements.

\paragraph{Field normalization.}
To place all channels on comparable scales, the phase field is shifted to $\phi' = \phi + 1 \in [1,2]$. Displacement components are normalized by the prescribed top-boundary displacement at load step $k$,
\begin{equation}
    s_k=\max\!\bigl(|u_y^{\mathrm{top}}(k)|,\epsilon\bigr),
    \qquad
    \widetilde{u}_{\alpha}(k)
    =
    \operatorname{clip}\!\left(\frac{u_\alpha(k)}{s_k},-1,1\right),
    \qquad \alpha\in\{x,y\},
    \label{eq:disp_norm}
\end{equation}
where $\epsilon>0$ prevents division by zero and $u_y^{\mathrm{top}}(k)=t_k\,u_r$ is the applied Dirichlet value on $\Gamma_{\mathrm{top}}$.
The normalization scale is determined from the prescribed loading condition and is therefore available during both training and deployment.
The sparse displacement-magnitude channel is similarly rescaled by a global factor computed from the training set only.

\paragraph{Network architecture.}
The damage and displacement fields are predicted by three independent but identically structured CNN2D--ConvGRU networks, one each for $\phi$, $u_x$, and $u_y$ (Fig.~\ref{fig:model_architecture}).
Each network is fully convolutional and processes the frames in $\mathcal{T}$ one at a time.
A convolutional encoder with a stride-two layer, group normalization, and ReLU activations maps every frame to a spatial feature map.
That map is then propagated through a convolutional gated recurrent unit (ConvGRU) \cite{ballas2016,shi2015}.
For an input feature map $\mathbf{x}_t$ and hidden state $\mathbf{h}_{t-1}$, the ConvGRU cell computes
\begin{align}
    \mathbf{z}_t &= \sigma\big(W_z \ast [\mathbf{x}_t,\mathbf{h}_{t-1}]\big), \\
    \mathbf{r}_t &= \sigma\big(W_r \ast [\mathbf{x}_t,\mathbf{h}_{t-1}]\big), \\
    \tilde{\mathbf{h}}_t &= \tanh\big(W_h \ast [\mathbf{x}_t,\mathbf{r}_t\odot\mathbf{h}_{t-1}]\big), \\
    \mathbf{h}_t &= (1-\mathbf{z}_t)\odot\mathbf{h}_{t-1} + \mathbf{z}_t\odot\tilde{\mathbf{h}}_t,
    \label{eq:convgru}
\end{align}
where $\ast$ denotes $3\times3$ 2D convolution in Eq.~\eqref{eq:convgru}, $\odot$ the Hadamard product, $\sigma$ the logistic sigmoid, and $\mathbf{z}_t$, $\mathbf{r}_t$ are the update and reset gates.
The encoder is the three-layer stack $\mathrm{Conv}(C_{\mathrm{in}}\!\to\!32)\to\mathrm{Conv}_{s2}(32\!\to\!C_h)\to\mathrm{Conv}(C_h\!\to\!C_h)$.
A single stride-two convolution halves the spatial resolution.
Every convolution is followed by group normalization (8 groups) and a ReLU activation.
The recurrent state has $C_h=32$ hidden channels in a $3\times3$ ConvGRU cell.
After the final frame of $\mathcal{T}$, the decoder $\mathrm{ConvT}_{s2}(C_h\!\to\!32)\to\mathrm{Conv}(32\!\to\!16)\to\mathrm{Conv}(16\!\to\!1)$ returns a single-channel field at the original resolution.
The phase-field network takes two input channels (sparse displacement magnitude and $\phi'$).
Each displacement network takes three channels ($u_x$, $u_y$, and $\phi'$).
All three networks emit one output channel.
All learned layers are convolutional, so the same weights can be evaluated at different raster sizes.
This does not by itself guarantee resolution consistency: kernel meaning and receptive field change with pixel spacing.
Transfer from $256\times256$ to $512\times512$ is therefore an empirical test (Section~\ref{sec:scaling}).
Table~\ref{tab:ml_arch} summarizes the layout.

\begin{table}[htbp]
    \centering
    \caption{CNN2D--ConvGRU surrogate architecture and input/output layout. The same fully convolutional weights can be evaluated at different raster sizes.}
    \label{tab:ml_arch}
    \begin{tabular}{ll}
        \toprule
        Component & Setting \\
        \midrule
        Networks & 3 independent CNN2D--ConvGRU (for $\phi$, $u_x$, $u_y$) \\
        Past-history input $\mathcal{T}$ & $10$ steps ($+$ current sparse step) \\
        Sensor grid & $32\times32$ (bilinear down/up-sample) \\
        Native training resolution & $256\times256$ \\
        $\phi$-network input / output channels & $2$ ($|u|_{\mathrm{sparse}}$, $\phi'$) / $1$ \\
        $u_x,u_y$-network input / output channels & $3$ ($u_x$, $u_y$, $\phi'$) / $1$ \\
        Encoder & $\mathrm{Conv}(C_{\mathrm{in}}{\to}32)\,{\to}\,\mathrm{Conv}_{s2}(32{\to}32)\,{\to}\,\mathrm{Conv}(32{\to}32)$ \\
        Hidden channels $C_h$ & 32 \\
        ConvGRU kernel & $3\times3$ \\
        Decoder & $\mathrm{ConvT}_{s2}(32{\to}32)\,{\to}\,\mathrm{Conv}(32{\to}16)\,{\to}\,\mathrm{Conv}(16{\to}1)$ \\
        Normalization / activation & GroupNorm (8 groups) / ReLU \\
        \bottomrule
    \end{tabular}
\end{table}

The CNN2D--ConvGRU architecture is selected because it preserves spatial structure while propagating temporal information through the loading history.
Table~\ref{tab:ml_arch} fixes the practical layout used throughout the paper: three identical heads, a $10$-step history, $32\times32$ sparse sensing on the $256\times256$ training raster, and a $C_h=32$ ConvGRU state with a single stride-two encoder/decoder pair.
The $\phi$ head takes two channels and each displacement head takes three, so the sensor magnitude and the shifted phase $\phi'$ are always available to the damage network, while the displacement networks also see both displacement components.
Because every layer is convolutional, the same weights can be evaluated at $512\times512$ without architectural change; that transfer is tested empirically in Section~\ref{sec:scaling}.
Its suitability relative to alternative spatial and temporal backbones is evaluated in Section~\ref{sec:architecture_benchmark}.

\paragraph{Fracture-aware loss.}
Spatial masks derived from the phase field emphasize intact material and the damage interface when scoring displacements. With threshold $\tau=1.5$ (the midpoint of the shifted phase range), softness $\beta$, and interface band $b$, we define a soft solid mask and an interface mask,
\begin{equation}
    s(\phi') = \sigma\!\left(\frac{\tau-\phi'}{\beta}\right),\qquad
    w_{\mathrm{int}}(\phi') = \exp\!\left(-\frac{|\phi'-\tau|}{b}\right).
    \label{eq:masks}
\end{equation}
Displacement accuracy is evaluated primarily in intact material and near the damage interface. Fully damaged regions are downweighted because the diffuse phase-field representation does not assign a physically meaningful zero-displacement condition inside the crack band.
The phase-field network is trained with a plain mean-squared error on $\phi'$,
\begin{equation}
    \mathcal{L}_\phi = \big\lVert \hat{\phi}' - \phi' \big\rVert_2^2 ,
    \label{eq:loss_phi}
\end{equation}
while the displacement networks minimize the mask-weighted objective
\begin{equation}
    \mathcal{L}_u
    =
    \frac{\sum (s+\gamma_{\mathrm{int}}w_{\mathrm{int}})
    |\widehat{u}-u|^2}
    {\sum (s+\gamma_{\mathrm{int}}w_{\mathrm{int}})},
    \label{eq:loss_u}
\end{equation}
where $\gamma_{\mathrm{int}}$ is the interface weight.
Mask weights, normalization constants, and data-split settings are collected in Appendix~\ref{sec:app_ml_train} (Table~\ref{tab:ml_train}).

\paragraph{Optimization and deployment.}
The two branches are optimized independently with separate Adam optimizers at learning rate $10^{-3}$, following a staged $\phi$-then-displacement schedule ($30$/$60$/$10$ epochs per stage, $100$ epochs total).
At the end of each epoch the model is evaluated on the report set, and the checkpoint minimizing the combined report-set error from Eq.~\eqref{eq:loss_phi} and Eq.~\eqref{eq:loss_u}, $\mathcal{L}_\phi+\mathcal{L}_u$, is retained.
At deployment, the model is applied sequentially with sparse measurement assimilation at every load step, starting from an initial history window $\mathcal{T}$.
Predicted fields are compared against the finite element reference both pointwise and through the mean phase-field damage diagnostic of Eq.~\eqref{eq:mean_phi}.

\begin{figure}[htbp]
    \centering
    \includegraphics[width=\linewidth]{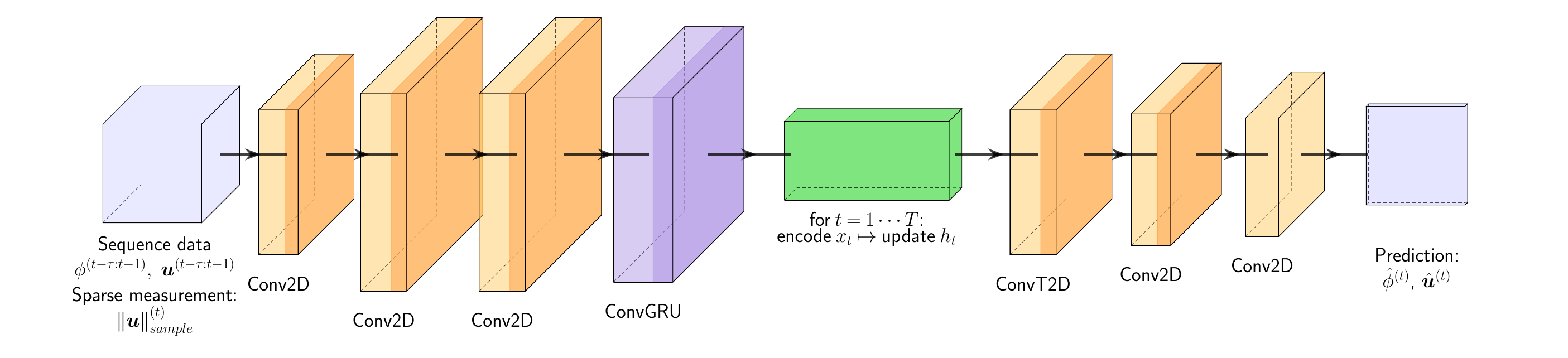}
    \caption{Architecture of the CNN2D--ConvGRU model. Two-dimensional convolutions encode local spatial patterns in the multimodal field stack at each time step; convolutional GRU states propagate temporal information from the history window $\mathcal{T}$ so that the current full-field state is reconstructed from past fields and sparse current-step measurements.}
    \label{fig:model_architecture}
\end{figure}

Figure~\ref{fig:model_architecture} summarizes how Table~\ref{tab:ml_arch} is realized in the forward pass.
Each frame of $\mathcal{T}$ is encoded to a spatial feature map, the ConvGRU updates a hidden state with convolutional gates as in Eq.~\eqref{eq:convgru}, and a transpose-convolution decoder returns a single-channel field at the original raster size.
Running three independent copies for $\phi$, $u_x$, and $u_y$ keeps the losses of Section~\ref{sec:methods} separable during the staged schedule, while sharing the same convolutional inductive bias across fields.
The schematic also clarifies why spatial context is retained throughout the loading history.
The recurrent state is a feature map rather than a pooled vector.
That property is tested against flat GRU, LSTM, RNN, and neural ODE alternatives in Section~\ref{sec:architecture_benchmark}.

\section{Results}\label{sec:results}

This section evaluates the CNN2D--ConvGRU surrogate introduced in Section~\ref{sec:methods} against the FEniCS phase-field finite element trajectories governed by Eq.~\eqref{eq:weak_u}--Eq.~\eqref{eq:weak_phi}.
Unless stated otherwise, the reference solution is the rasterized output of the staggered solver with history field Eq.~\eqref{eq:history}, volumetric--deviatoric split Eq.~\eqref{eq:psi_split}, and parameters in Appendix Table~\ref{tab:pf_sim}.
All reported diagnostics in this section (parity, field comparisons, mean-damage trajectories, and timing) use the split of Appendix Table~\ref{tab:ml_train}: $75$ training and $25$ report trajectories, with the sparse-displacement protocol and past-history input $\mathcal{T}$ of Section~\ref{sec:methods} and Fig.~\ref{fig:input_output}.
Architecture and optimizer comparisons use the same partition (Appendix~\ref{sec:app_benchmark}).

\subsection{Training results}

We optimize the three independent CNN2D--ConvGRU networks (Table~\ref{tab:ml_arch}) with the staged schedule in Appendix Table~\ref{tab:ml_train}.
The schedule uses $30$ epochs on the phase-field branch minimizing Eq.~\eqref{eq:loss_phi}, $60$ epochs on the displacement branches with mask-weighted Eq.~\eqref{eq:loss_u}, and $10$ epochs of joint fine-tuning.
The checkpoint retained for all subsequent deployments minimizes the combined report-set objective $\mathcal{L}_\phi+\mathcal{L}_u$.
Displacement learning uses the intact/interface weights in Eq.~\eqref{eq:masks}--\eqref{eq:loss_u}.
No zero-displacement projection is applied inside the diffuse crack band.

\begin{figure}[htbp]
    \centering
    \includegraphics[width=\linewidth]{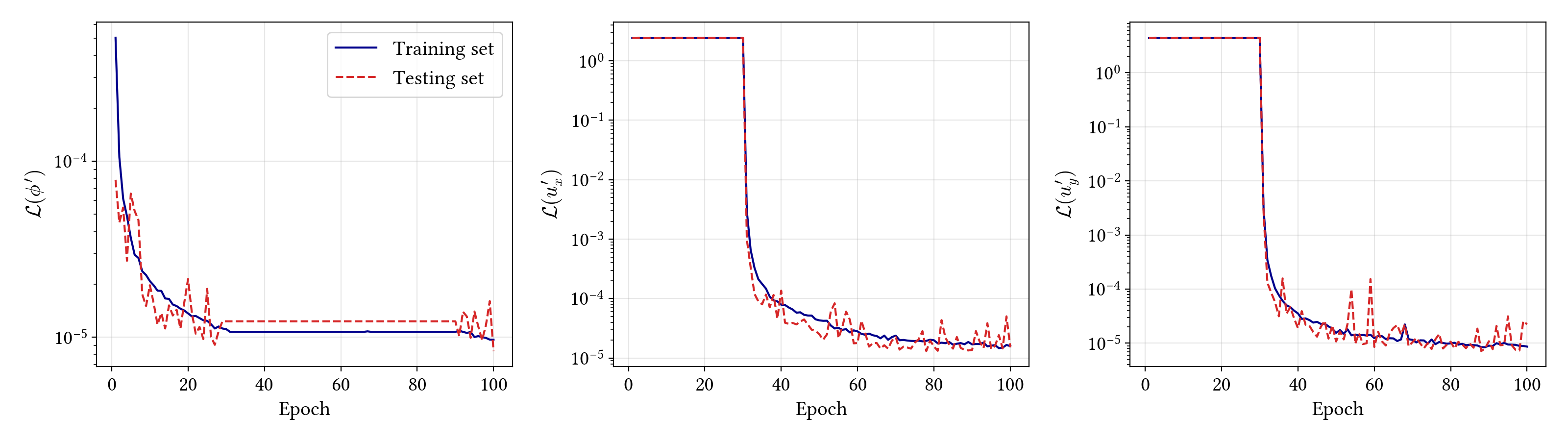}
    \caption{Training and report-set loss versus epoch for the CNN2D--ConvGRU model. Convergence of the report-set curve indicates that the learned spatiotemporal map generalizes beyond the optimization set used for parameter updates.}
    \label{fig:loss}
\end{figure}

Figure~\ref{fig:loss} summarizes the training and report-set loss histories for the $\phi$, $u_x$, and $u_y$ networks under Adam.
Both branches decrease over the $100$ epochs, and the report-set curves follow the training trends without a pronounced gap, indicating that the learned spatiotemporal map extrapolates to unseen crack layouts rather than memorizing simulation-specific details.
The staged $\phi$-then-displacement schedule appears as sharp changes at the stage boundaries.
$\mathcal{L}(\phi')$ falls rapidly during the first $30$ epochs while only the phase-field branch is updated.
It then stays essentially flat from epochs $31$--$90$ because the $\phi$ weights are frozen while the displacement networks train.
A small upward step near epoch $91$ marks joint fine-tuning, which re-enables $\phi$ updates at learning rate $10^{-3}$.
These plateaus follow from the curriculum freeze, not from a delayed ``grokking'' transition.
The logged $\phi$ loss stops changing when $\phi$ is not optimized.
Displacement losses stay at their uninitialized level until the displacement stage unlocks at epoch $31$.
Throughout training we retain the checkpoint that minimizes the combined report-set objective $\mathcal{L}_\phi+\mathcal{L}_u$, so the deployed weights need not coincide with the final epoch.

\paragraph{Optimizer comparison.}
To select the optimizer in Table~\ref{tab:ml_train}, we compared Adam, Muon, and SGD under the production protocol of Appendix~\ref{sec:app_benchmark}.
We keep train/report $75$/$25$, batch size $2$, FNO displacement refiner off, $\lambda_{\mathrm{phys}}=0$, learning rate $10^{-3}$, and the staged schedule $30$/$60$/$10$, changing only the optimizer.
Selection uses report-set metrics on the held-out trajectories.

\begin{table}[htbp]
\centering
\caption{Optimizer comparison for the CNN2D--ConvGRU baseline (train/report $75$/$25$; batch size $2$; FNO refiner off; $\lambda_{\mathrm{phys}}=0$; $100$ epochs; $30$/$60$/$10$ schedule; learning rate $10^{-3}$; NVIDIA L40S).
Selection uses report-set losses; mean $\phi$ $R^2$ is reported after selection.
``Stable'': finite report-set displacement losses (\benchpass{}: finite; \benchfail{}: diverged).}
\label{tab:optimizer_sweep}
\begin{tabular}{lcccccc}
\toprule
Optimizer & $\phi$ $R^2$ & Rep.\ $\mathcal{L}(\phi')$ & Rep.\ $\mathcal{L}(u_x)$ & Rep.\ $\mathcal{L}(u_y)$ & Time [h] & Stable \\
\midrule
Adam & $0.9997$ & $8.35\times10^{-6}$ & $1.52\times10^{-5}$ & $2.24\times10^{-5}$ & $66.1$ & \benchpass \\
Muon & $0.9997$ & $8.52\times10^{-6}$ & $1.60\times10^{-5}$ & $7.35\times10^{-6}$ & $65.2$ & \benchpass \\
SGD & $0.9967$ & $1.08\times10^{-4}$ & $8.63\times10^{-5}$ & $2.79\times10^{-4}$ & $63.9$ & \benchpass \\
\bottomrule
\end{tabular}
\end{table}

Table~\ref{tab:optimizer_sweep} shows that Adam and Muon keep finite report-set displacement losses, reach final report-set losses near $10^{-5}$, and finish in essentially matched wall-clock time on an NVIDIA L40S ($\approx65$--$66\,\mathrm{h}$).
Muon attains a slightly lower best combined report-set selection metric than Adam, while Adam remains the production default used for the architecture suites and reported deployment.
SGD also remains finite after the displacement stage unlocks at epoch $31$.
Its report-set $\mathcal{L}(u_x)$ and $\mathcal{L}(u_y)$ stay about one to two orders of magnitude larger than Adam and Muon (final values near $10^{-4}$ versus $10^{-5}$).
We therefore reject SGD on report-set displacement quality.
We retain Adam for all subsequent experiments and treat Muon as a competitive alternative for the mask-weighted displacement objective.

\begin{figure}[htbp]
    \centering
    \includegraphics[width=\linewidth]{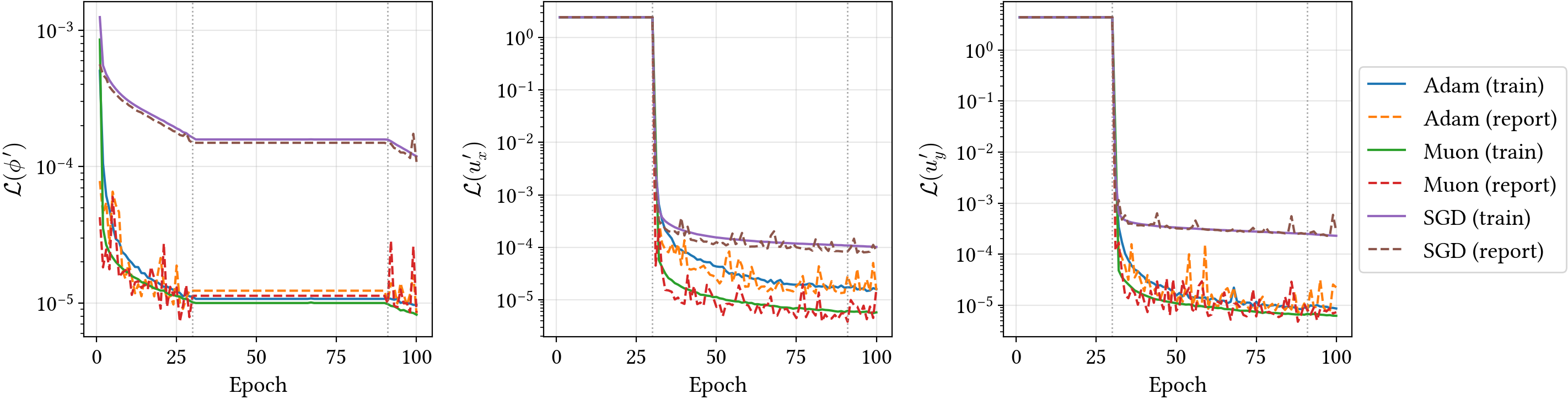}
    \caption{Training (solid) and report-set (dashed) losses for Adam, Muon, and SGD under the fixed $30$/$60$/$10$ staged schedule.
    Vertical dotted lines mark the end of the $\phi$-only stage (epoch $30$) and the start of joint fine-tuning (epoch $91$).
    Displacement losses remain high until the displacement branch is unlocked; Adam and Muon then drop by roughly two orders of magnitude and track each other closely, while SGD improves but remains about one to two orders higher.}
    \label{fig:optimizer_sweep}
\end{figure}

Figure~\ref{fig:optimizer_sweep} makes the stage structure of Table~\ref{tab:optimizer_sweep} visible in the loss histories.
Displacement losses stay high until the displacement branch unlocks at epoch $31$; Adam and Muon then drop by roughly two orders of magnitude and track each other closely on both training and report-set curves.
SGD improves after the same unlock but remains about one to two orders higher through the displacement and fine-tuning stages.
The vertical markers at epochs $30$ and $91$ coincide with the curriculum boundaries used for all three optimizers, so the gap after epoch $31$ is attributable to optimizer choice rather than to a change in schedule.

\begin{figure}[htbp]
    \centering
        \includegraphics[width=\linewidth]{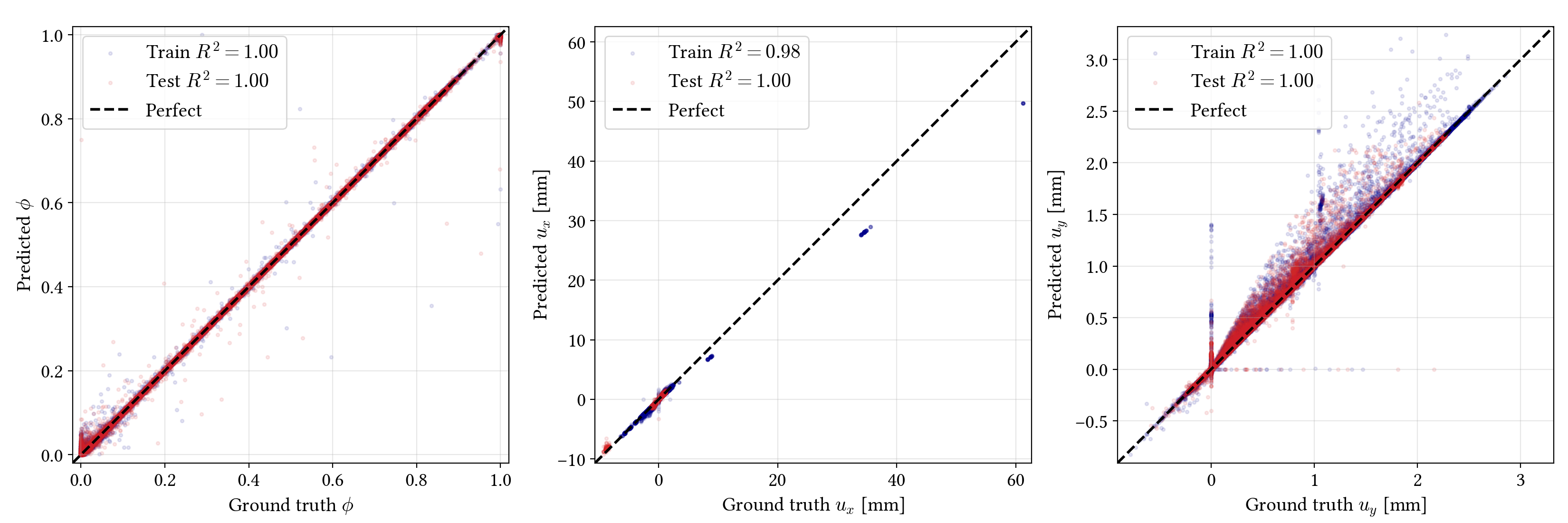}
    \caption{Parity plot comparing surrogate predictions against finite element references over the report set. Points near the diagonal indicate close agreement in the reported field or integrated metrics; systematic bias would appear as off-diagonal structure.}
    \label{fig:parity_all}
\end{figure}

Figure~\ref{fig:parity_all} compares current-step reconstructions with finite-element references across the report set for the retained Adam checkpoint.
Predictions for the shifted phase field $\phi'$ and the load-scaled displacements $\widetilde{u}_{\alpha}$ defined in Eq.~\eqref{eq:disp_norm} cluster near the identity line, with no evident systematic bias across the test trajectories.
Displacement accuracy is assessed primarily in intact material and near the damage interface; fully damaged voxels are downweighted in Eq.~\eqref{eq:loss_u} and are not forced to a zero-displacement condition.
Together, Figs.~\ref{fig:loss} and~\ref{fig:parity_all} show that the ConvGRU-based encoder--decoder in Fig.~\ref{fig:model_architecture} and Eq.~\eqref{eq:convgru} captures the dominant spatial and temporal features of the fracture data set before sequential deployment.
The model uses a history window of $\mathcal{T}=10$ steps and $C_h=32$ hidden channels, as listed in Table~\ref{tab:ml_arch}.

\subsection{Sequential deployment with sparse measurement assimilation}\label{sec:deployment}

We next apply the trained model sequentially on report-set simulations, assimilating sparse displacement measurements at every load step.
Each case is advanced on the native $256\times256$ export grid (Table~\ref{tab:pf_sim}) using the sparse $32\times32$ displacement sensing and masked current-step phase input described in Section~\ref{sec:methods}.
At each load step the model receives $\mathcal{T}$ and the current sparse measurements, reconstructs $\hat\phi'$, $\hat u_x$, and $\hat u_y$, and denormalizes displacements with the prescribed-load scale $s_k$ from Eq.~\eqref{eq:disp_norm}.

\begin{figure}[ht]
    \centering
    \includegraphics[width=.65\linewidth]{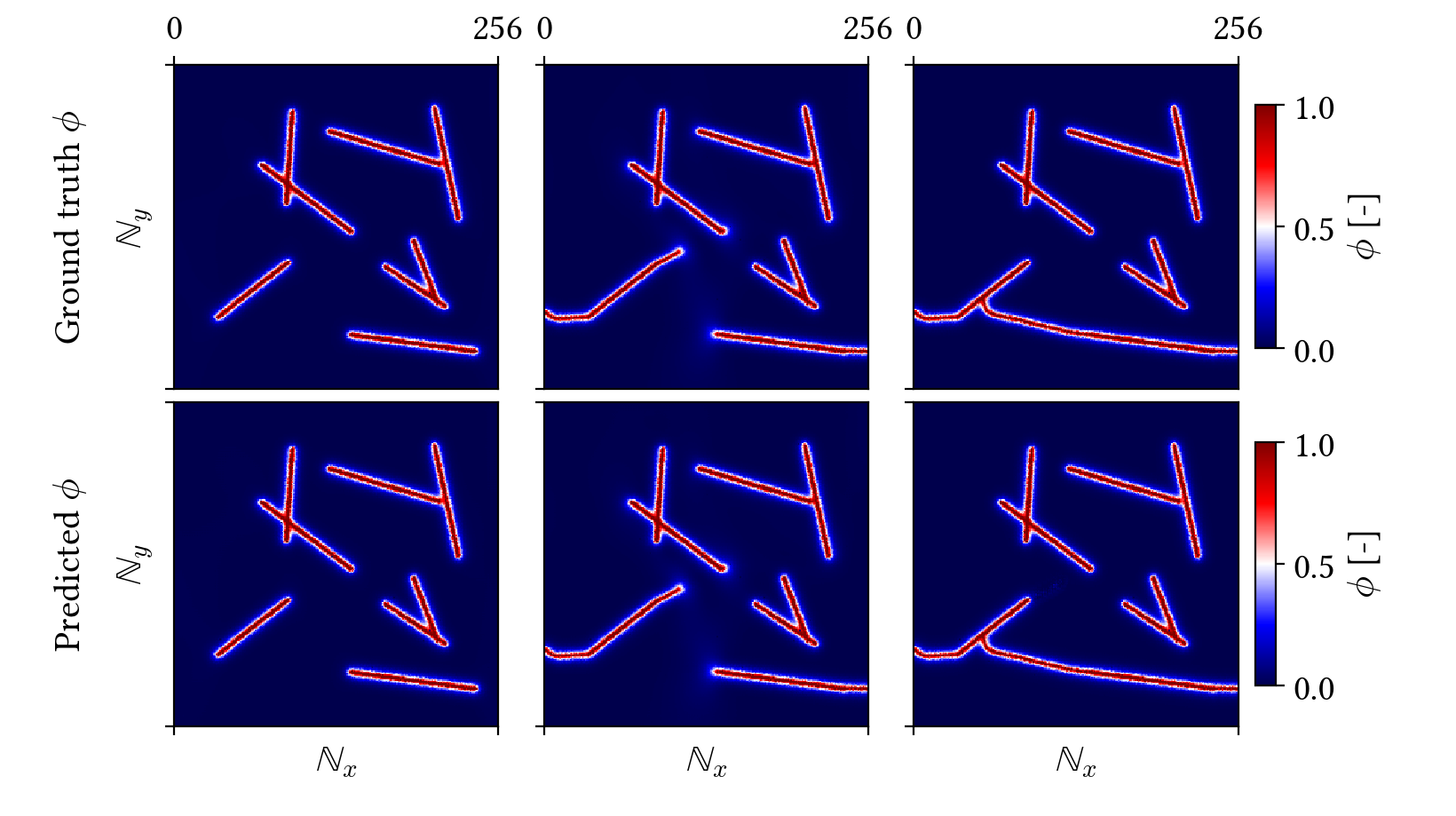}
    \caption{Phase-field damage $\phi$ for a report-set case at the training-grid resolution, shown at three increasing load steps (columns: early, intermediate, and near-terminal moments of the loading history): top row, finite element reference; bottom row, CNN2D--ConvGRU reconstruction. Crack paths and diffuse damage zones are captured with limited spurious smearing outside the fracture band. Fields are on the $1\,\mathrm{m}\times1\,\mathrm{m}$ domain, and the applied top displacement advances by $\Delta u_y\approx5\,\mu\mathrm{m}$ per converged frame along the $u_y\le2.5\,\mathrm{mm}$ ramp.}
    \label{fig:phi_pred}
\end{figure}

Figure~\ref{fig:phi_pred} compares the reference phase field (top row) with the network reconstruction (bottom row) at three moments during active crack growth.
The columns span increasing applied top displacements within the $0$--$2.5\,\mathrm{mm}$ ramp.
The per-frame increment is $\Delta u_y=\Delta t\,u_r\approx5\,\mu\mathrm{m}$.
The model reproduces the main crack branches and the diffuse bands whose width is set by $\ell$ in Table~\ref{tab:pf_sim}.
By the near-terminal column the segments have coalesced into a spanning damage band across the specimen.
Spurious damage outside the fracture corridors is limited, consistent with Eq.~\eqref{eq:loss_phi} and with irreversibility from Eq.~\eqref{eq:history} in the training data.

\begin{figure}[ht]
    \centering
    \begin{subfigure}{.5\linewidth}
        \centering
        \includegraphics[width=\linewidth]{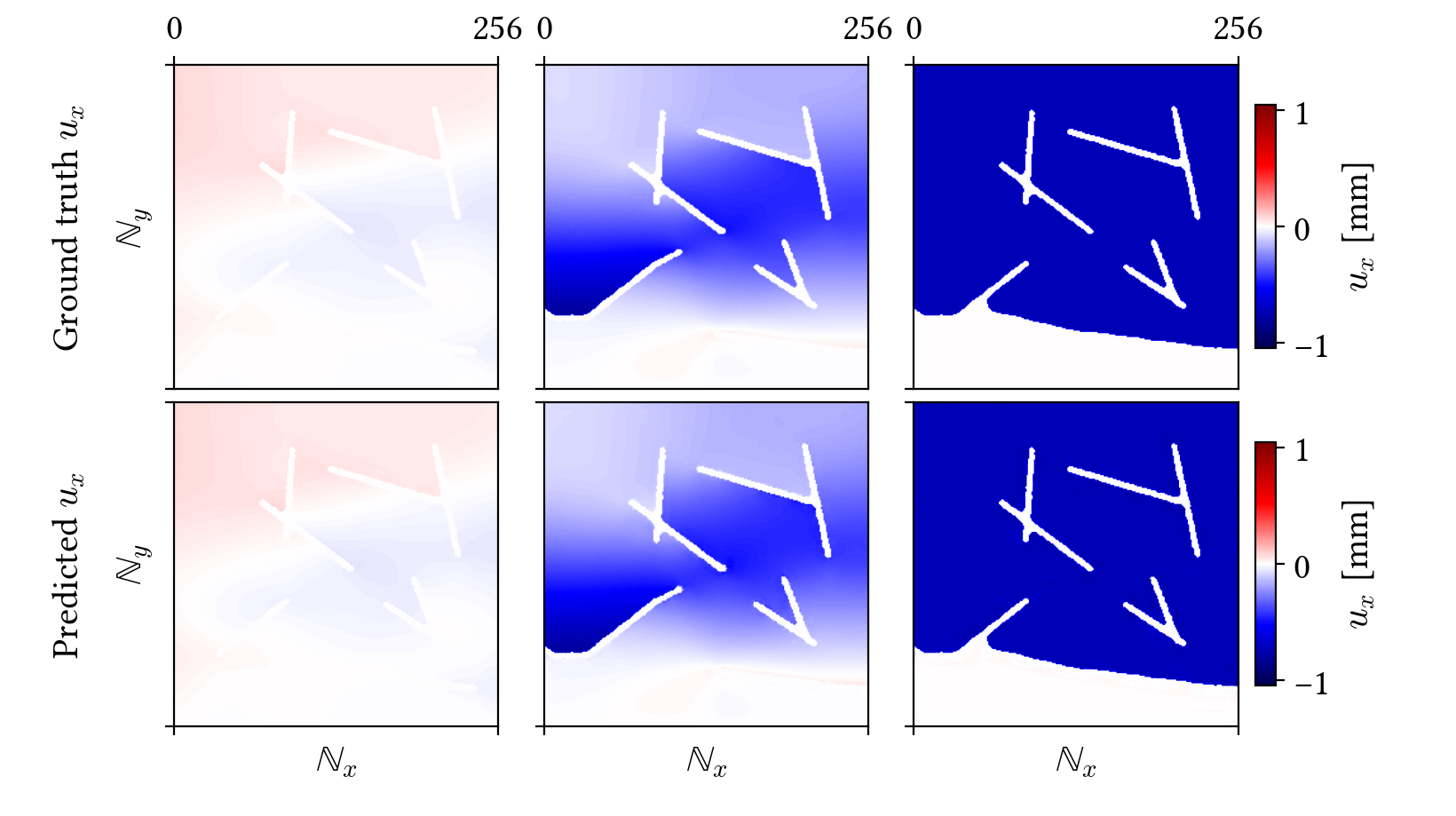}
        \label{fig:ux_pred}\caption{}
    \end{subfigure}\hfill
    \begin{subfigure}{.5\linewidth}
        \centering
        \includegraphics[width=\linewidth]{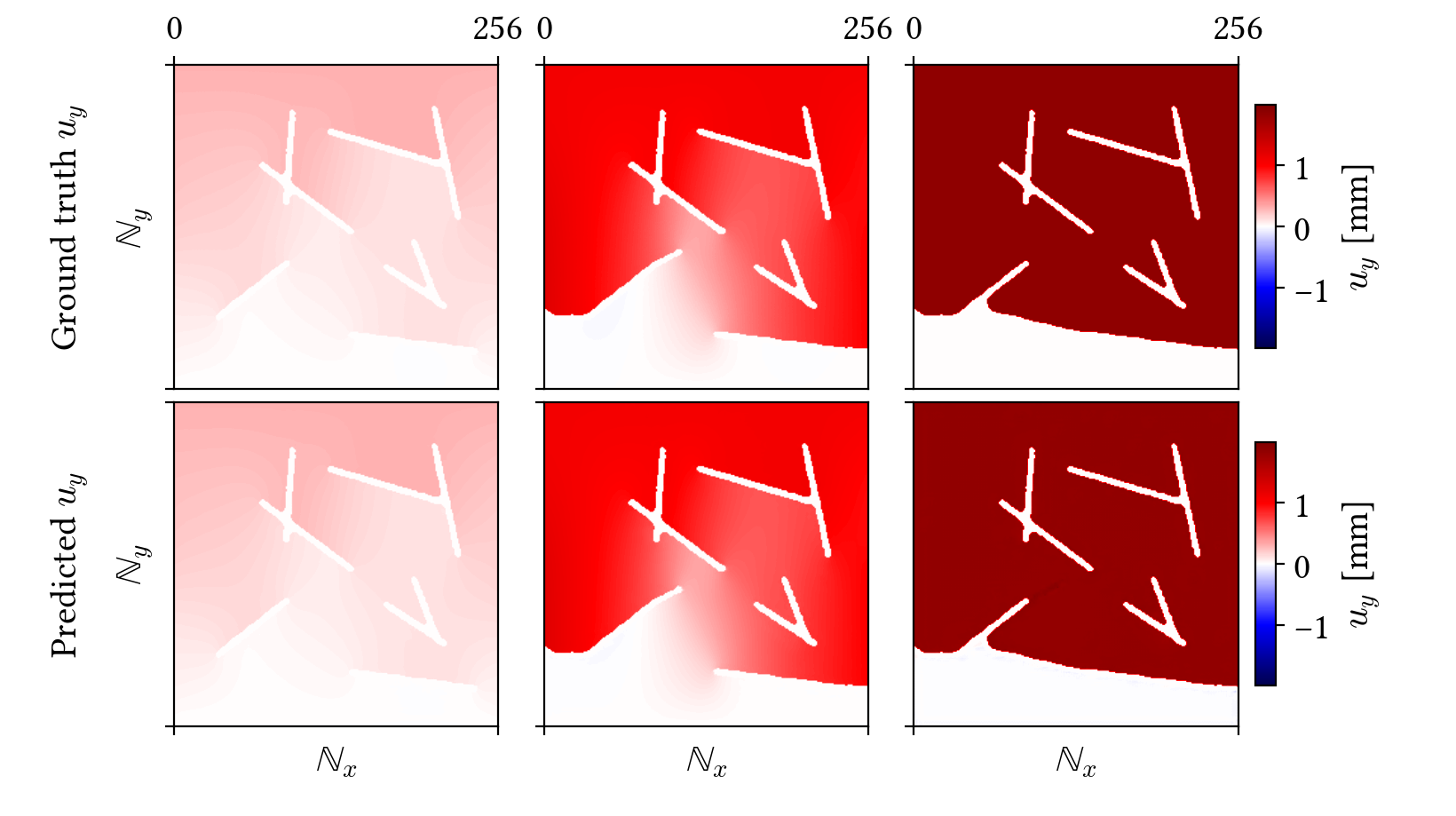}
        \label{fig:uy_pred}\caption{}
    \end{subfigure}
    \caption{Displacement fields at $256\times256$: (a) $u_x$; (b) $u_y$.
    Top: FEM; bottom: network at three load steps.
    Largest errors occur near tips and softened zones ahead of the damage front.}
    \label{fig:disp_pred}
\end{figure}

Figure~\ref{fig:disp_pred}(a) and (b) show the corresponding horizontal and vertical displacement fields at the same three load steps (top row reference, bottom row reconstruction).
The largest pointwise errors concentrate near propagating crack tips and along steep displacement gradients in the softened zones ahead of the damage front.
There the crack-driving energy in Eq.~\eqref{eq:psi_split} couples strongly to the evolving degradation $g(\phi)$ in Eq.~\eqref{eq:weak_u}.
Away from fully damaged regions, the predicted displacements match the reference bulk elastic pattern, reflecting the intact-region weighting and interface emphasis in Eq.~\eqref{eq:loss_u}.
As in Section~\ref{sec:methods}, fully damaged voxels are downweighted in Eq.~\eqref{eq:loss_u}.

We summarize the global progression of fracture through the mean phase-field damage
\begin{equation}
\bar{\phi}(t)=\frac{1}{N}\sum_{i=1}^{N}\phi_i(t),
\label{eq:mean_phi}
\end{equation}
which averages the diffuse phase field over the $N$ raster pixels.
Raw (uncorrected) trajectories of $\bar{\phi}$ are reported in Figs.~\ref{fig:damage_ratio} and~\ref{fig:damage_ratio_512}; no post hoc bias alignment is applied.

\begin{figure}[ht]
    \centering
    \includegraphics[width=.5\linewidth]{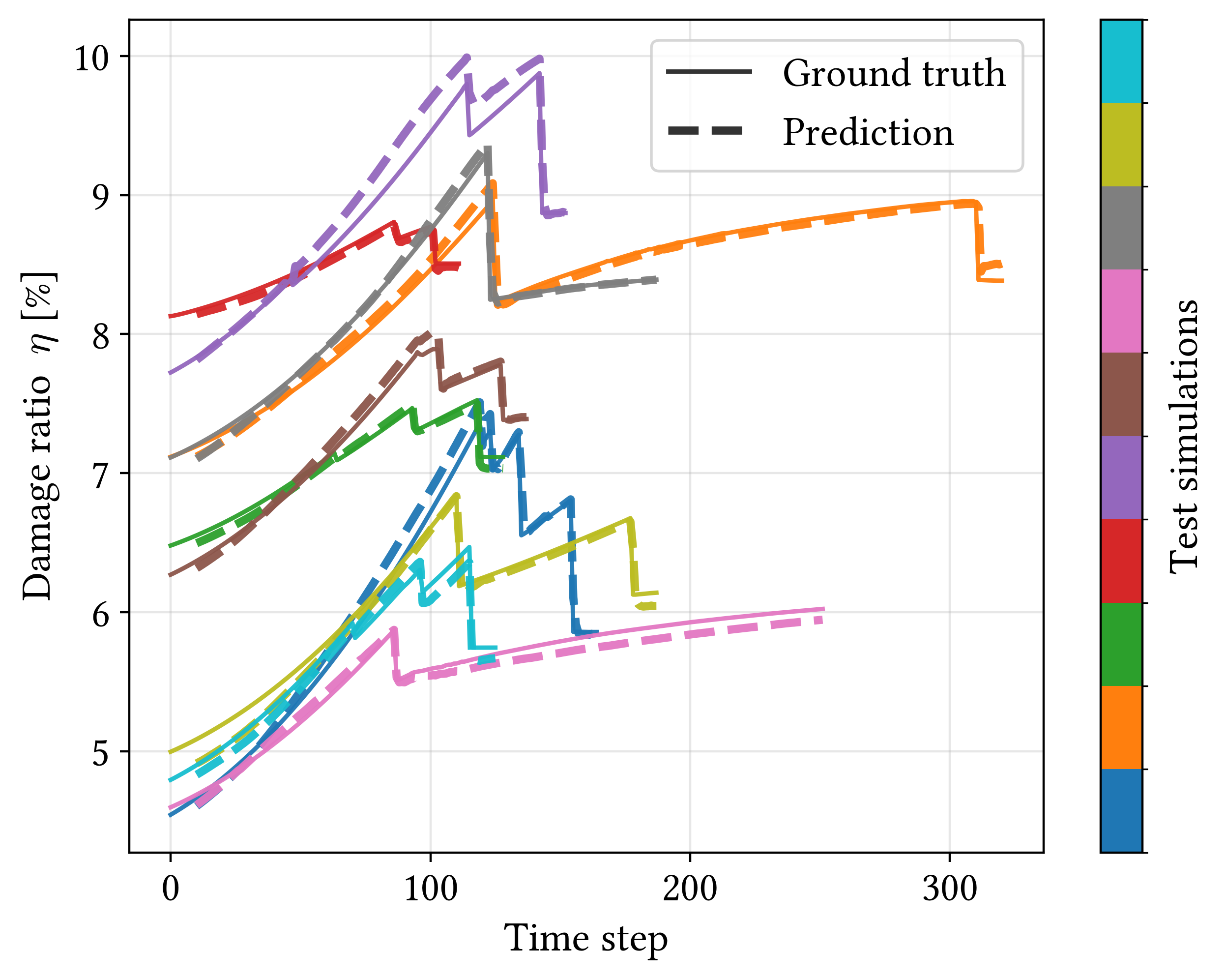}
    \caption{Mean phase-field damage $\bar{\phi}$ (Eq.~\eqref{eq:mean_phi}) versus load step for ten report-set cases at $256\times256$.
    Dashed: network; solid: FEM reference.
    Each step advances $u_y$ by $\Delta u_y\approx5\,\mu\mathrm{m}$ toward $2.5\,\mathrm{mm}$.}
    \label{fig:damage_ratio}
\end{figure}

Figure~\ref{fig:damage_ratio} reports $\bar{\phi}$ from Eq.~\eqref{eq:mean_phi} versus pseudo-time for ten report-set configurations at the native $256\times256$ grid.
Figs.~\ref{fig:phi_pred} and~\ref{fig:disp_pred} show three instants along one representative trajectory from this set.
Each load step advances the applied top displacement by $\Delta u_y\approx5\,\mu\mathrm{m}$ toward the terminal $u_y\approx2.5\,\mathrm{mm}$.
The reconstructed trajectories track the finite element references through nucleation, growth, and coalescence.
Residual late-time drift is consistent with the history-dependent nature of the reference process Eq.~\eqref{eq:history} and with compounding reconstruction errors when the model is applied sequentially over many load steps.

\subsection{Transfer to a finer raster resolution}\label{sec:scaling}

As noted in Section~\ref{sec:methods} and Table~\ref{tab:ml_arch}, the convolutional weights can be evaluated at a different raster size.
Transfer from $256\times256$ to $512\times512$ remains an empirical test: receptive field and pixel spacing change together.
The underlying $300\times300$ finite-element mesh and physical model are unchanged.
Only the neural-network raster is refined.
For the $512\times512$ evaluation, sparse displacement sensing uses a $64\times64$ grid instead of $32\times32$.
The sensor-to-raster ratio is preserved (one sample per $8\times8$ block).
On the fixed physical domain this also halves the physical sensor spacing and quadruples the number of sensors.
The experiment therefore tests joint transfer to a finer raster and a proportionally refined measurement grid.
It does not isolate raster refinement at fixed physical sensor locations.

\begin{figure}[ht]
    \centering
    \includegraphics[width=.65\linewidth]{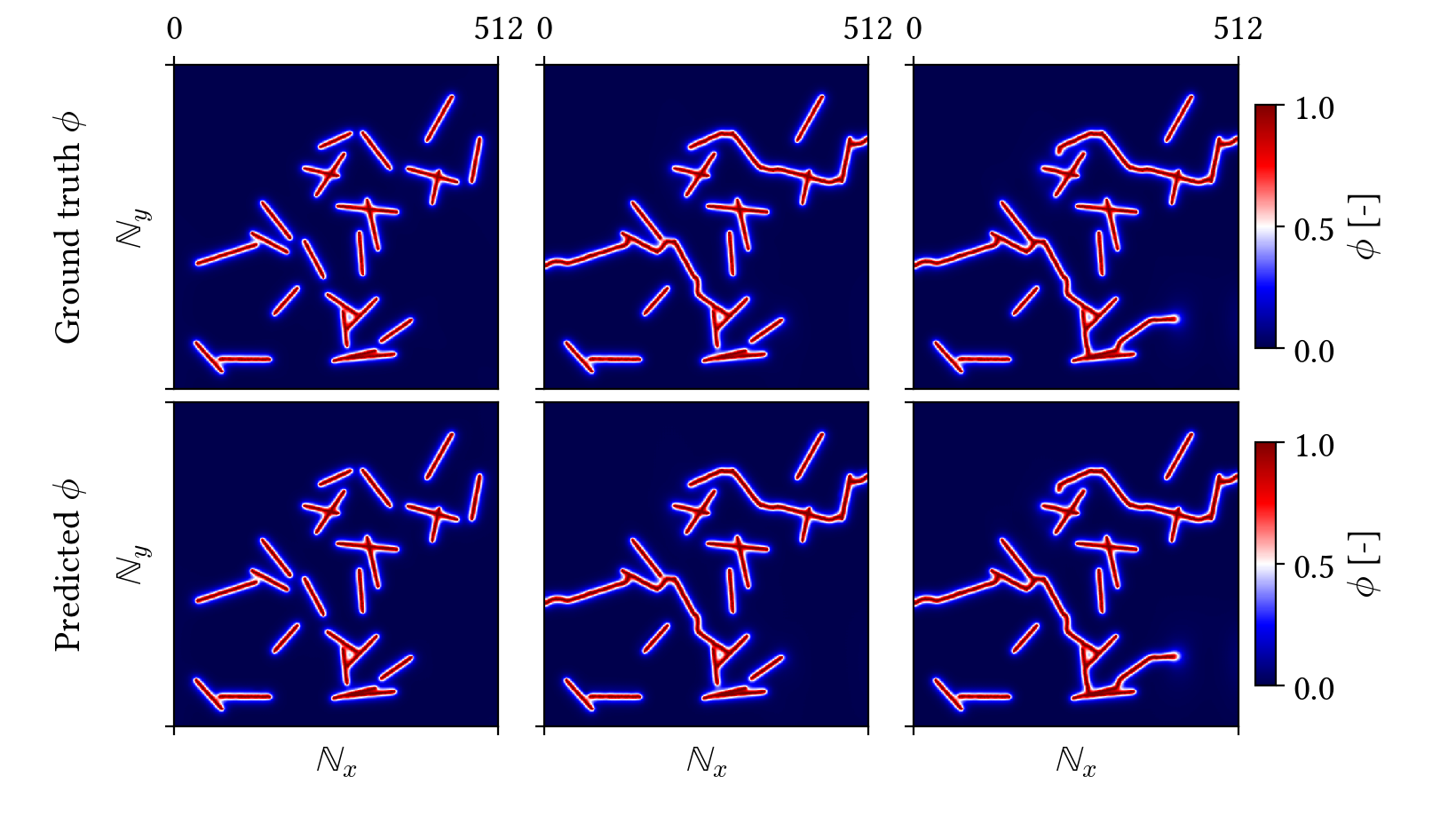}
    \caption{Phase-field $\phi$ on a $512\times512$ raster of the same physical domain (three load steps; top: FEM; bottom: sequential reconstruction without retraining).
    Domain and $300\times300$ FEM mesh are unchanged; pixel size $\approx3.9\to1.95\,\mathrm{mm}$.}
    \label{fig:phi_pred_512}
\end{figure}

Figure~\ref{fig:phi_pred_512} compares the reference phase field (top row) with the sequential network reconstruction (bottom row) on the finer $512\times512$ raster at three increasing load steps, without retraining.
The damage topology remains aligned with the reference through early, intermediate, and near-terminal loading.
Errors near tips and along interacting crack segments are slightly amplified relative to the $256\times256$ deployment in Section~\ref{sec:deployment}, as expected from error accumulation over many sequential steps on a longer spatial index set.

\begin{figure}[ht]
    \centering
    \begin{subfigure}{.5\linewidth}
        \centering
        \includegraphics[width=\linewidth]{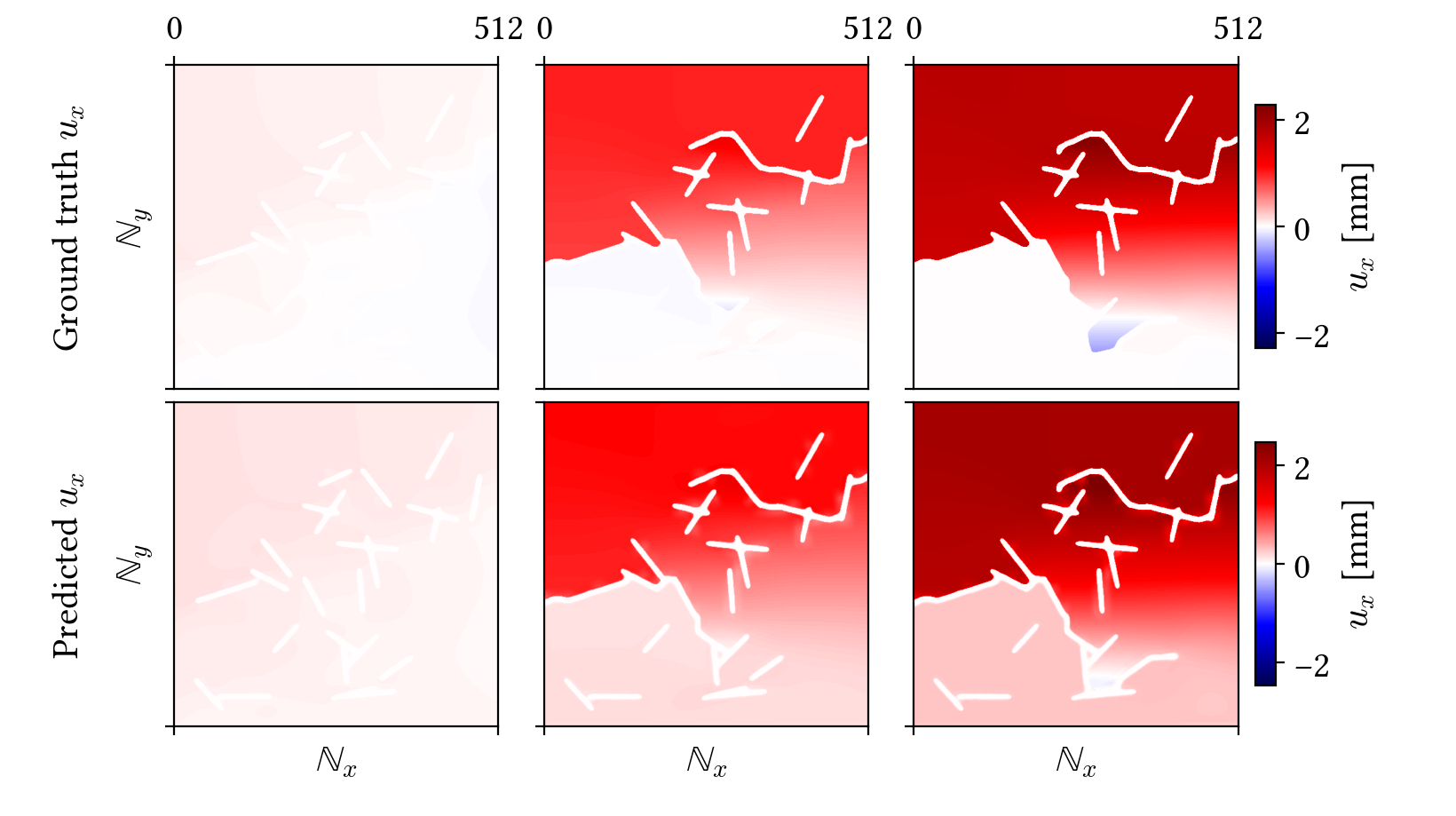}
        \label{fig:ux_pred_512}\caption{}
    \end{subfigure}\hfill
    \begin{subfigure}{.5\linewidth}
        \centering
        \includegraphics[width=\linewidth]{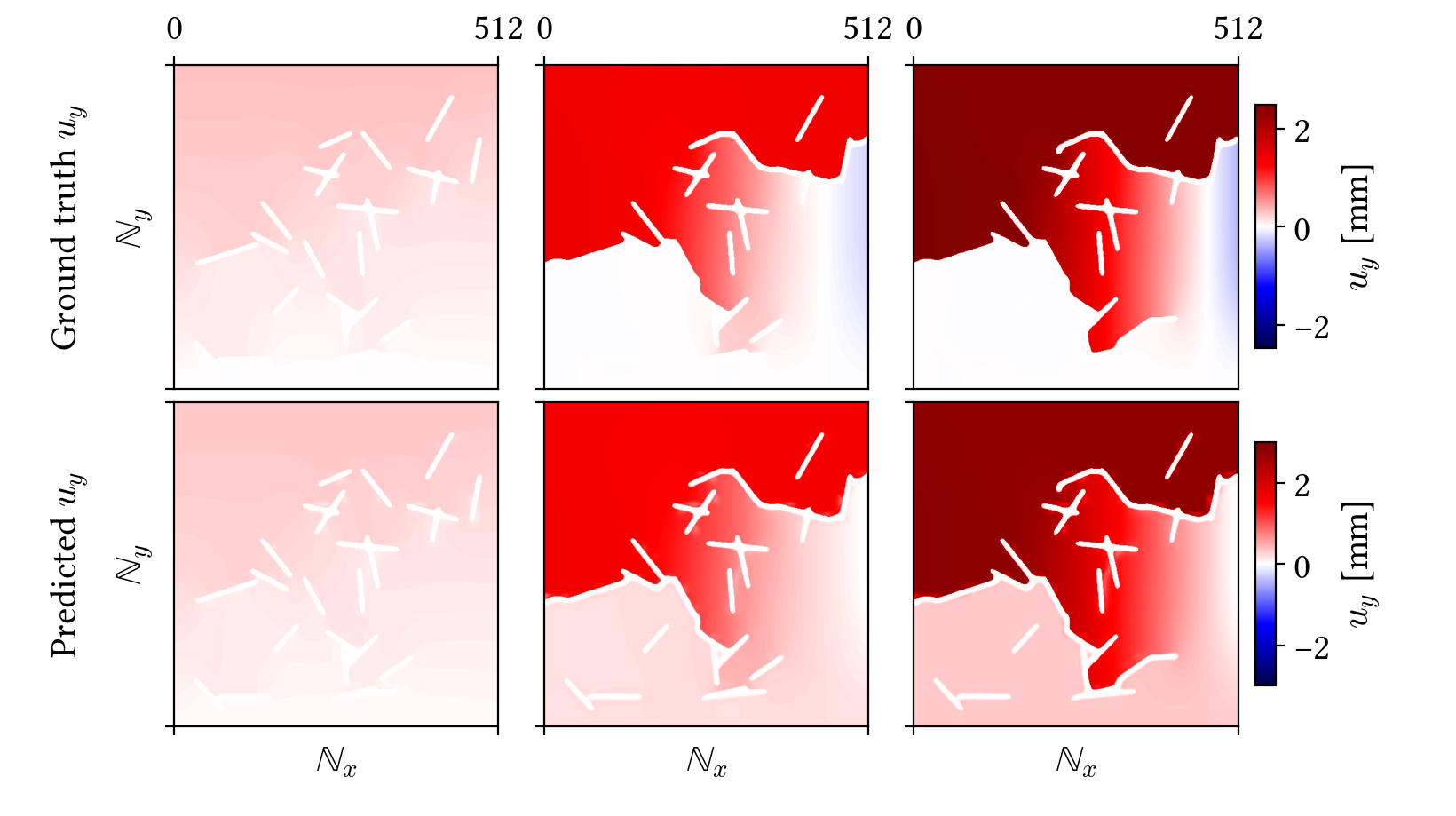}
        \label{fig:uy_pred_512}\caption{}
    \end{subfigure}
    \caption{Displacement fields on the finer $512\times512$ raster of the same physical domain: (a) $u_x$ and (b) $u_y$; in each the top row is the finite element reference and the bottom row the network reconstruction, shown at three increasing load steps (early, intermediate, and near-terminal moments). Physical size is fixed at $1\,\mathrm{m}\times1\,\mathrm{m}$ (raster pixel $\approx1.95\,\mathrm{mm}$ at $512^2$); displacements are in mm.}
    \label{fig:disp_pred_512}
\end{figure}

Figure~\ref{fig:disp_pred_512}(a) and (b) show the corresponding horizontal and vertical displacements on the same finer raster and load steps.
As on the training grid, the largest discrepancies concentrate near propagating tips and steep gradients in softened zones ahead of the damage front.
Away from fully damaged bands, the bulk elastic pattern remains consistent with the reference, indicating that the convolutional weights transfer to the refined pixel spacing when the sensor-to-raster ratio is preserved.

\begin{figure}[ht]
    \centering
    \includegraphics[width=.5\linewidth]{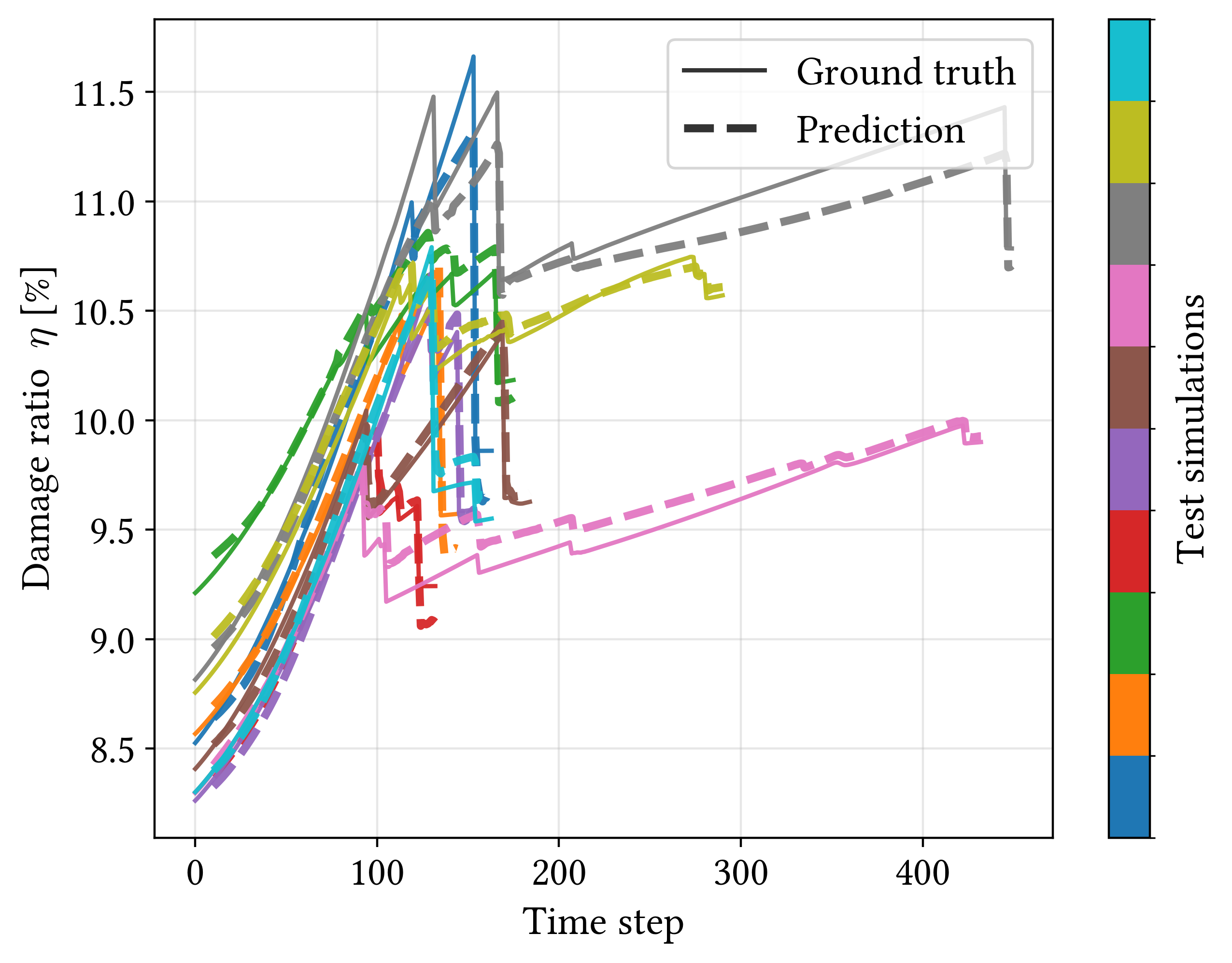}
    \caption{Mean phase-field damage $\bar{\phi}$ (Eq.~\eqref{eq:mean_phi}) versus load step for ten cases at $512\times512$.
    Dashed: network without retraining; solid: FEM reference.}
    \label{fig:damage_ratio_512}
\end{figure}

Figure~\ref{fig:damage_ratio_512} reports $\bar{\phi}$ from Eq.~\eqref{eq:mean_phi} for ten cases on the finer $512\times512$ raster of Figs.~\ref{fig:phi_pred_512} and~\ref{fig:disp_pred_512}.
The reconstructed trajectories again follow the FEM references through nucleation, growth, and coalescence.
Because Eq.~\eqref{eq:mean_phi} averages diffuse $\phi$ rather than thresholding it, the curves summarize the full sequential deployment without retraining.
$\bar{\phi}$ tracks the FEM curves more closely than the pointwise displacements in Figure~\ref{fig:disp_pred_512}(a) and (b), consistent with Fig.~\ref{fig:phi_pred_512}.

In summary, the framework shows three capabilities.
It reconstructs report-set fields at the training raster and tracks mean damage over sequential deployment (Figs.~\ref{fig:phi_pred}--\ref{fig:damage_ratio}).
It assimilates sparse displacement measurements with partial phase information (Section~\ref{sec:methods}).
It also evaluates on an unseen $512\times512$ raster without retraining (Figs.~\ref{fig:phi_pred_512}--\ref{fig:damage_ratio_512}).
Together with Figs.~\ref{fig:loss} and~\ref{fig:parity_all}, these results support CNN2D--ConvGRU as a practical measurement-conditioned accelerator for phase-field fracture state inference.

\subsection{Speedup evaluations}

We compare wall-clock cost of sequential measurement-conditioned reconstruction against regenerating the same trajectory with the FEM solver.
Table~\ref{tab:rollout_speedup_256} shows ten report-set cases at the $256\times256$ resolution (the same ones as in Fig.~\ref{fig:damage_ratio}).
For each case, we take the FEM time from saved logs and measure how long the network takes on CPU and GPU.
At every load step the network reconstructs the current fields from the previous $10$ states and sparse current-step measurements, starting once a history window is available (from step $10$).
This is the same setup used in Fig.~\ref{fig:damage_ratio}.
We report speedup as FEM time divided by network inference time.

\begin{table}[htbp]
\centering
\caption{Inference speedup vs.\ finite element reference for ten $256\times256$ report-set simulations (listed in the same order as Fig.~\ref{fig:damage_ratio}). FEM wall time is recovered from archival generation timestamps; network wall time is measured for CPU and GPU inference with repeated current-step reconstructions (history length $=10$). Steps: total number of converged load increments $T$ retained per trajectory.}
\label{tab:rollout_speedup_256}
\begin{tabular}{r r r r r r r}
\toprule
Sim.\ \# & Steps & FEM [h] & Time CPU [s] & Speedup CPU & Time GPU [s] & Speedup GPU \\
\midrule
1 & 165 & 2.97 & 57.2 & 187$\times$ & 40.1 & 266$\times$ \\
2 & 321 & 4.77 & 113.6 & 151$\times$ & 79.9 & 215$\times$ \\
3 & 129 & 2.27 & 44.6 & 183$\times$ & 30.7 & 265$\times$ \\
4 & 112 & 1.81 & 37.3 & 174$\times$ & 26.3 & 247$\times$ \\
5 & 153 & 2.72 & 52.8 & 186$\times$ & 36.6 & 268$\times$ \\
6 & 138 & 2.63 & 47.1 & 201$\times$ & 32.8 & 289$\times$ \\
7 & 252 & 3.55 & 89.8 & 142$\times$ & 62.2 & 206$\times$ \\
8 & 188 & 2.98 & 66.1 & 162$\times$ & 45.9 & 233$\times$ \\
9 & 188 & 3.21 & 67.0 & 173$\times$ & 45.9 & 252$\times$ \\
10 & 126 & 2.38 & 43.9 & 196$\times$ & 29.9 & 287$\times$ \\
\midrule
Mean & -- & 2.93 & 61.9 & 175$\times$ & 43.0 & 253$\times$ \\
\bottomrule
\end{tabular}
\end{table}

Across the ten cases, the finite element reference runs require $\approx1.8$--$4.8,\mathrm{h}$, depending on simulation length ($T=112$--$321$ load steps).
In contrast, the network completes the corresponding sequential reconstructions in under two minutes on CPU and under one minute on GPU.
The mean speedups are $175\times$ on CPU and $253\times$ on GPU.
The per-case gains range from $142\times$ to $201\times$ on CPU and from $206\times$ to $289\times$ on GPU.
Each finite element increment requires a full staggered solve on the unstructured mesh, whereas each current-step reconstruction requires only a single forward pass through the convolutional--recurrent stack.
This difference leads to a two-orders-of-magnitude reduction in wall-clock cost for these quasi-static load histories.
We do not tabulate the finer-raster $512\times512$ deployment study of Section~\ref{sec:scaling} here.
That study refines only the exported raster, while the underlying FEniCS discretization is unchanged; therefore, a direct timing comparison would conflate raster-resolution transfer with inference acceleration.
The $512\times512$ results in Figs.~\ref{fig:phi_pred_512}--\ref{fig:damage_ratio_512} instead demonstrate evaluation on a finer raster representation of the same physical domain.
The present table isolates the main practical question for deployment: how much faster measurement-conditioned reconstruction is than repeated high-fidelity staggered solves at the training-grid raster.

\subsection{Benchmarking data-driven architectures}\label{sec:architecture_benchmark}

To select the architecture in Section~\ref{sec:methods} and Table~\ref{tab:ml_arch}, we ran two suites on the $256\times256$ crack ensemble in Table~\ref{tab:pf_sim} under the production protocol.
Each candidate keeps the three-network layout for $\phi$, $u_x$, and $u_y$.
Each matches the baseline parameter budget within $\pm5\%$ (FNO displacement refiner off; budget $\approx2.52\times10^{5}$).
Each follows the staged schedule and hyperparameters in Appendix Tables~\ref{tab:ml_train}--\ref{tab:benchmark_protocol} for $100$ epochs.
Every run was carried out on a single NVIDIA L40S GPU.

We split the $100$ crack-pattern simulations by trajectory into $75$ for training and $25$ for reporting in Table~\ref{tab:benchmark_combined}.
Architecture ranking uses report-set performance (pass if $\phi$ $R^2\ge0.99$) together with training wall time among passing models.
All architectures share this split, along with the same losses, sensing setup, and training schedule.

In the \emph{spatial} suite we replace the CNN2D encoder--decoder with FNO2D, DeepONet, or PointNet while keeping the ConvGRU temporal core fixed.
In the \emph{temporal} suite we keep the CNN2D stack and swap ConvGRU for a flat GRU, LSTM, vanilla RNN, or neural ODE integrator on spatially pooled features.
Appendix~\ref{sec:app_benchmark} collects the remaining protocol details.

\begin{table}[htbp]
\centering
\caption{Architecture benchmark summary ($75$ training / $25$ report trajectories; $100$ epochs; matched parameter budget without FNO refiner; NVIDIA L40S).
Mean $\phi$ $R^2$ on the report set; a model passes if $R^2_{\phi}\ge0.99$.}
\label{tab:benchmark_combined}
\begin{tabular}{lcccccc}
\toprule
 & \multicolumn{3}{c}{Spatial} & \multicolumn{3}{c}{Temporal} \\
\cmidrule(lr){2-4} \cmidrule(lr){5-7}
Backbone & $\phi$ $R^2$ & Time [h] & Pass & $\phi$ $R^2$ & Time [h] & Pass \\
\midrule
CNN2D--ConvGRU (baseline) & $0.9998$ & $66.22$ & \benchpass & $0.9997$ & $65.77$ & \benchpass \\
FNO2D--ConvGRU & $0.9996$ & $69.29$ & \benchpass & -- & -- & -- \\
DeepONet--ConvGRU & $0.0580$ & $63.84$ & \benchfail & -- & -- & -- \\
PointNet--ConvGRU & $0.9997$ & $64.13$ & \benchpass & -- & -- & -- \\
CNN2D--GRU & -- & -- & -- & $0.0035$ & $59.19$ & \benchfail \\
CNN2D--LSTM & -- & -- & -- & $0.0035$ & $59.06$ & \benchfail \\
CNN2D--RNN & -- & -- & -- & $0.0031$ & $55.80$ & \benchfail \\
CNN2D--Neural ODE & -- & -- & -- & $0.0031$ & $66.65$ & \benchfail \\
\bottomrule
\end{tabular}
\end{table}

Table~\ref{tab:benchmark_combined} reports mean $\phi$ $R^2$ on the report set, training wall time, and whether each model meets $R^2_{\phi}\ge0.99$ on that set.
In the spatial suite of Table~\ref{tab:benchmark_combined}, CNN2D--ConvGRU, FNO2D--ConvGRU, and PointNet--ConvGRU all pass with $R^2_{\phi}\ge0.9996$.
DeepONet--ConvGRU fails ($R^2_{\phi}\approx0.058$).
In the temporal suite, only CNN2D--ConvGRU passes ($R^2_{\phi}\approx0.9997$).
Flat GRU, LSTM, vanilla RNN, and neural ODE cores remain near $R^2_{\phi}\approx0.003$.
Among spatial encoders that pass, PointNet is slightly cheaper than CNN2D ($\approx64\,\mathrm{h}$ versus $\approx66\,\mathrm{h}$).
FNO2D is the most expensive ($\approx69\,\mathrm{h}$).
Relative to CNN2D--ConvGRU, FNO2D--ConvGRU does not improve mean report-set $\phi$ $R^2$ ($0.9996$ versus $0.9998$) under the matched parameter budget.
It only adds training cost.
For this sparse-sensing, locally structured crack-reconstruction task, a Fourier neural operator encoder is therefore not useful: global spectral mixing does not buy accuracy over local convolutions, so we keep the CNN2D encoder and leave the optional FNO displacement refiner off in the production protocol (Tables~\ref{tab:ml_train} and~\ref{tab:benchmark_protocol}).
The temporal failures show that pooling spatial features before a flat recurrent or ODE update is insufficient under the matched budget.
We retain CNN2D--ConvGRU as the reported baseline.
Convolutional recurrence is required by the temporal suite, it attains the highest spatial $R^2_{\phi}$, and it matches the architecture used for the deployment studies above.

\section{Discussion}\label{sec:discussions}

A central practical advantage of the proposed model is evaluation on finer or coarser rasters without architectural change \cite{lecun1998,ballas2016,shi2015,geneva2020}.
Section~\ref{sec:methods} already notes that convolutional weights transfer across raster size only as an empirical property.
Section~\ref{sec:scaling} tests that property for $256\times256\to512\times512$ with a proportionally refined sensor grid on the same physical domain and $300\times300$ FEM mesh.
Figs.~\ref{fig:phi_pred_512}--\ref{fig:damage_ratio_512} remain aligned with the FEM reference through early fracture.
We do not report downscaling experiments here.
The motivation is computational: high-fidelity phase-field finite element trajectories remain expensive to generate and march, especially when the regularization length must be resolved and irreversibility couples many nonlinear solves \cite{bourdin2007,ambati2015,miehe2010a}.
Replacing repeated staggered updates with measurement-conditioned sequential inference amortizes that cost across parameter studies, design exploration, and real-time screening, consistent with broader data-driven model reduction for mechanics \cite{lee2020,geneva2020}.

The spatial architecture suite in Section~\ref{sec:architecture_benchmark} also clarifies the role of Fourier operator layers for this problem.
FNO2D--ConvGRU meets the pass threshold in Table~\ref{tab:benchmark_combined}, but it does not outperform CNN2D--ConvGRU on report-set $\phi$ $R^2$ and it trains more slowly.
Under a matched parameter budget, global Fourier mixing therefore adds little for reconstructing localized crack bands from sparse sensors and a short field history.
We conclude that FNO is not useful for the present setting and retain local convolutions without an FNO displacement refiner.

At the finer raster, Fig.~\ref{fig:damage_ratio_512} shows that the mean phase-field damage $\bar{\phi}$ from Eq.~\eqref{eq:mean_phi}, together with the phase field $\phi$ in Fig.~\ref{fig:phi_pred_512}, are reproduced more closely than the displacement components.
Figure~\ref{fig:disp_pred_512}(a) and (b) show larger pointwise $u_x$ and $u_y$ errors than in the $256\times256$ report-set deployment in Section~\ref{sec:deployment}, even though bulk trends, crack topology, and the early part of the load path remain well captured.
This pattern is expected for three reasons.
First, sequential deployment compounds one-step reconstruction errors over many increments, a known limitation of learned time-step models \cite{geneva2020,goswami2022}.
Appendix~\ref{app:sequential_error} records a conditional discrete Gr\"onwall estimate for sequential reconstruction with shared current-step measurements.
History-window mismatches can accumulate when a local Lipschitz assumption holds on a restricted set of histories.
The estimate does not assert that the phase-field fracture evolution is globally Lipschitz, nor that the network is an autonomous time integrator.
It formalizes why late-time drift in Figs.~\ref{fig:damage_ratio} and~\ref{fig:damage_ratio_512} can grow with deployment length even when single-step errors remain small.
Second, the network was trained on $75$ simulations at $256\times256$, so fine-raster displacement detail was never part of the optimization data.
Third, the $512\times512$ test in Section~\ref{sec:scaling} evaluates simultaneous transfer to a finer raster and a proportionally refined measurement grid rather than fully unseen crack-pattern generalization; displacement fields are therefore evaluated under extrapolation in spatial sampling.
For trend-level failure screening, the results track where cracks grow and how global mean damage evolves.
They remain useful for that purpose \cite{ambati2015}.
Displacement-dominated quantities near tips should be interpreted with this raster-resolution and sequential-deployment caveat, in line with other learned brittle-fracture surrogates \cite{goswami2022}.

The sequence-learning viewpoint also clarifies how the model relates to the variable pacing of the reference solver.
The ground truth trajectories (Sec.~\ref{sec:methods}) use adaptive pseudo-time stepping (Appendix Table~\ref{tab:pf_sim}), so stored frames are not uniformly spaced in the applied displacement $u_y=t\,u_r$ (equivalently pseudo-time); the base increment $\Delta u_y\approx5\,\mu\mathrm{m}$ shrinks adaptively during rapid fracture.
The CNN2D--ConvGRU model nevertheless advances one frame index at a time: each call reconstructs the current checkpoint from $\mathcal{T}$ and sparse current-step measurements, without embedding a fixed $\Delta t$ in the network \cite{shi2015,ballas2016}.
In that sense the map is indexed on simulation checkpoints rather than on a prescribed clock, while path dependence in the training data is retained through the recurrent state and the history-field reference trajectories governed by Eq.~\eqref{eq:history}.
A single trained model can therefore follow trajectories whose FEM save times differ across cases.
This helps explain why Fig.~\ref{fig:damage_ratio} tracks the reference well during nucleation and coalescence before gradual drift under long sequential chains.

Deployment imposes two explicit inputs that should be stated as limitations rather than hidden assumptions.
The model cannot start from the boundary loading alone: it needs the past-history input $\mathcal{T}$ to get going, usually taken from the first few steps of a high-fidelity run or from an earlier sequential deployment (Fig.~\ref{fig:input_output}).
At the current step it also requires partial observations in the form of coarse $32\times32$ sparse displacement measurements, while the present-phase damage channel is withheld so that $\phi$ must be reconstructed.
Sparse displacement measurements do not uniquely determine the complete phase-field and displacement state in general.
The reconstruction performed by the network therefore depends on the temporal history and on statistical regularities learned from the finite-element training ensemble.
The reported test results demonstrate empirical reconstruction accuracy within this data distribution, but they do not establish uniqueness or observability for arbitrary crack configurations, loading conditions, or sensor layouts.
This matches the sparse and partial measurement setting emphasized in the Methods section.
This is a different information regime from dense field regression or physics-informed networks that ingest full state snapshots or PDE residuals at every step \cite{lee2020,raissi2019}.
Hybrid online correction with occasional finite element updates, or additional channels such as boundary reaction force, are natural ways to reduce long-horizon drift without abandoning the sparse-sensing interface.

\section{Conclusions}\label{sec:conclusions}

We developed a convolutional--recurrent model, i.e., CNN2D--ConvGRU, for measurement-conditioned inference of time-dependent phase-field brittle fracture from sparse and partial observations.
The model treats irreversible fracture evolution as reconstruction of the current full-field state rather than as regression to scalar failure quantities.
Given a fixed-length history $\mathcal{T}$ and sparse, partial measurements at the current load step, it reconstructs the full damage and displacement fields; sequential deployment assimilates sparse measurements at every load step.
%

The trained model reproduces the main crack paths, damage evolution, and bulk displacement response on report-set deployments.
The largest field errors are concentrated near actively propagating crack tips, where small differences in crack advance produce localized displacement and damage discrepancies.
The mean phase-field damage $\bar{\phi}$ tracks the reference trajectories over full sequential deployments at the training raster, although late-time drift can accumulate as one-step reconstruction errors compound.
Empirically, the same trained weights also transfer from $256\times256$ to $512\times512$ without retraining, using a proportionally refined $64\times64$ measurement grid.
We interpret this as a simultaneous raster-and-sensing generalization test on the same FEM discretization, not as a theoretical resolution-invariance property.

The model provides more than two orders of magnitude speedup relative to repeated finite element solves for the tested fracture trajectories.
Controlled architecture comparisons further show that CNN2D--ConvGRU offers an effective accuracy--cost compromise in this setting.
It outperforms temporal alternatives that discard spatial hidden states.
It also matches or exceeds FNO2D and PointNet spatial encoders while remaining cheaper than FNO2D, so Fourier operator layers are not useful here.
These results suggest that convolutional sequence learning can recover full-field phase-field fracture states when only sparse and partial observations are available at the current step.

Several limitations remain.
The current model requires an initial history window, uses sparse displacement measurements at the current step, and is trained on a modest two-dimensional fracture ensemble.
Long-horizon displacement accuracy is also more sensitive to sequential error accumulation than the phase-field damage prediction.
Future work may incorporate stronger physics constraints, richer loading and boundary-condition inputs, larger and more diverse training ensembles, three-dimensional fracture problems, and hybrid correction strategies that periodically couple the network with high-fidelity finite element updates.

\section*{Acknowledgment}

H.Z. acknowledged support from the Graduate Fellowship via Stanford University School of Engineering.
Z.Z. acknowledged support from Stanford Energy Fellowship, Precourt Institute for Energy, and Department of Chemical Engineering at Stanford University.

\section*{Data availability}

The code used in this study will be released on \url{https://github.com/hanfengzhai/PhFF-ML} upon acceptance of the manuscript.

\section*{Appendix}
\appendix
\counterwithin{equation}{section}
\counterwithin{figure}{section}
\counterwithin{table}{section}

\section{Phase-field simulation and data-set parameters}\label{sec:app_pf_sim}

Table~\ref{tab:pf_sim} collects every numerical setting used by the FEniCS reference solver and by the randomized crack ensemble that generates the training trajectories.

\begin{table}[ht]
    \centering
    \caption{Phase-field fracture simulation and training-ensemble parameters for the FEniCS reference solver.}
    \label{tab:pf_sim}
    \begin{tabular}{lll}
        \toprule
        Group & Parameter & Value \\
        \midrule
        \multirow{4}{*}{\makecell[l]{Geometry \&\\ discretization}}
          & Domain $\Omega$ & $[-0.5,0.5]^2$ ($1\,\mathrm{m}\times1\,\mathrm{m}$) \\
          & FEM spaces & CG1 for $\mathbf{u}$, $\phi$ (FEniCS) \\
          & Mesh resolution & $300$ cells per dimension ($\approx 3.3\,\mathrm{mm}$ element) \\
          & Precrack band width & $2.5\times$ cell size ($\approx 8.3\times10^{-3}$, $8.3\,\mathrm{mm}$) \\
        \midrule
        \multirow{6}{*}{\makecell[l]{Material \&\\ model}}
          & Critical fracture energy $G_c$ & $0.5$ \\
          & Regularization length $\ell$ & $10^{-2}$ ($10\,\mathrm{mm}$) \\
          & Lam\'e parameters $\lambda$, $\mu$ & $10^{6}$ \\
          & Degradation function & $g(\phi)=(1-\phi)^2$ \\
          & Elastic energy split & Volumetric--deviatoric (Eq.~\eqref{eq:psi_split}) \\
          & Irreversibility & History field $H$ (Eq.~\eqref{eq:history}) \\
        \midrule
        \multirow{6}{*}{\makecell[l]{Loading \&\\ BCs}}
          & Bottom edge & $\mathbf{u}=\mathbf{0}$ \\
          & Top edge & $u_y=t\,u_r$, traction-free otherwise \\
          & Displacement ratio $u_r$ & $5\times10^{-3}$ ($=5\,\mathrm{mm}$) \\
          & Applied-displ.\ increment $\Delta u_y=\Delta t\,u_r$ & $\approx 5\,\mu\mathrm{m}$ (steady; $0.25\,\mathrm{mm}$ first step) \\
          & Max applied displacement $u_y$ & $2.5\,\mathrm{mm}$ (nominal strain $0.25\%$) \\
          & Pre-existing cracks & $\phi=1$ on seeded subdomains \\
        \midrule
        \multirow{4}{*}{\makecell[l]{Solver \&\\ stepping}}
          & Scheme & Staggered alternate minimization \\
          & Staggered tolerance & $10^{-3}$ ($L^2$ on $\mathbf{u}$, $\phi$) \\
          & Initial pseudo-time step $\Delta t$ & $5\times10^{-2}$ (adaptive reduction) \\
          & Maximum pseudo-time & $0.5$ \\
        \midrule
        Termination
          & Stop criterion & $F_y<100$ for $10$ consecutive converged steps \\
        \midrule
        \multirow{5}{*}{\makecell[l]{Crack\\ ensemble}}
          & Ensemble size & $100$ independent random crack patterns \\
          & Cracks per simulation & $5$--$10$ (uniform random) \\
          & Crack length & Uniform on $[0.2,0.4]$ ($200$--$400\,\mathrm{mm}$) \\
          & Boundary clearance & $\ge 0.05$ ($\ge 50\,\mathrm{mm}$) from domain edges \\
          & Exported raster grids & $256\times256$ ($\approx 3.9\,\mathrm{mm}$/px); $512\times512$ ($\approx 1.95\,\mathrm{mm}$/px) \\
        \bottomrule
    \end{tabular}
\end{table}

The domain and CG1 discretization in Table~\ref{tab:pf_sim} match the weak forms in Eq.~\eqref{eq:weak_u}--Eq.~\eqref{eq:weak_phi}.
The unit square represents a $1\,\mathrm{m}\times1\,\mathrm{m}$ specimen.
The mesh is uniform with $300$ cells per side.
The regularization length $\ell=10^{-2}$ ($10\,\mathrm{mm}$) is then resolved by several elements across a diffuse crack band.
Material parameters follow a standard brittle phase-field calibration: quadratic degradation $g(\phi)=(1-\phi)^2$, volumetric--deviatoric elastic splitting, and irreversibility through the history field $H$ in Eq.~\eqref{eq:history}.
Loading is monotonic tension with adaptive pseudo-time stepping.
Each converged frame advances the applied top displacement by $\Delta u_y\approx5\,\mu\mathrm{m}$ in the steady regime, up to $u_y\le2.5\,\mathrm{mm}$.
Termination occurs when the force diagnostic $F_y$ in Eq.~\eqref{eq:reaction} stays below $100$ for ten consecutive converged increments.
The ensemble block defines how the $100$ crack patterns are sampled (crack count, length, and edge clearance).
It also lists the export rasters: $256\times256$ for deployment and $512\times512$ for the transfer study in Section~\ref{sec:scaling}.

\section{Training and loss hyperparameters}\label{sec:app_ml_train}

Table~\ref{tab:ml_train} lists the optimization, loss, and normalization settings for the reported CNN2D--ConvGRU model.

\begin{table}[ht]
    \centering
    \caption{Training, loss, and normalization hyperparameters for the reported CNN2D--ConvGRU model.}
    \label{tab:ml_train}
    \begin{tabular}{lll}
        \toprule
        Group & Hyperparameter & Value \\
        \midrule
        \multirow{6}{*}{Optimization}
          & Optimizer & Adam \\
          & Learning rate & $10^{-3}$ \\
          & Weight decay & $0$ \\
          & Batch size & $2$ \\
          & Physics regularization $\lambda_{\mathrm{phys}}$ & $0$ \\
          & FNO displacement refiner & off \\
        \midrule
        \multirow{4}{*}{Schedule}
          & Total epochs & $100$ \\
          & Mode & $\phi$-then-displacement \\
          & Stage epochs ($\phi$ / disp / fine-tune) & $30$ / $60$ / $10$ \\
          & Random seed & $123$ \\
        \midrule
        \multirow{5}{*}{\makecell[l]{Masks \&\\ loss}}
          & Mask type & sigmoid \\
          & Threshold $\tau$ & $1.5$ \\
          & Softness $\beta$ & $0.05$ \\
          & Interface band $b$ & $0.05$ \\
          & Interface weight $\gamma_{\mathrm{int}}$ & $3.0$ \\
        \midrule
        \multirow{3}{*}{Normalization}
          & Phase shift & $1.0$ \\
          & Displacement scale & $s_k=\max(|u_y^{\mathrm{top}}(k)|,\epsilon)$ (Eq.~\eqref{eq:disp_norm}) \\
          & $|u|$ input scale & $50.0$ (training-set only) \\
        \midrule
        \multirow{2}{*}{Data}
          & Training & $75$ trajectories \\
          & Report / model selection & $25$ held-out trajectories \\
        \bottomrule
    \end{tabular}
\end{table}

Optimization in Table~\ref{tab:ml_train} uses Adam with learning rate $10^{-3}$ and batch size $2$; $\lambda_{\mathrm{phys}}=0$ and the FNO displacement refiner are off for this reported checkpoint.
Adam was selected after the optimizer comparison in Table~\ref{tab:optimizer_sweep} using report-set displacement quality and wall-clock cost; Muon remains competitive, while SGD stays finite but about one to two orders worse on the displacement losses.
Training follows a three-stage schedule: $30$ epochs on the phase-field branch alone, $60$ epochs on the displacement branches with mask-weighted loss Eq.~\eqref{eq:loss_u}, and $10$ epochs of joint fine-tuning ($100$ epochs total).
The mask block defines the sigmoid solid mask, interface band, and interface weight $\gamma_{\mathrm{int}}$ in Eq.~\eqref{eq:masks}--Eq.~\eqref{eq:loss_u}.
Normalization entries match the shifted phase $\phi'$ and the prescribed-load displacement scaling in Eq.~\eqref{eq:disp_norm}.
The split used throughout the paper is $75$ training and $25$ report trajectories (Appendix~\ref{sec:app_benchmark}).
The checkpoint with lowest $\mathcal{L}_\phi+\mathcal{L}_u$ on the report set is used for all reported deployments.

\section{Architecture benchmark protocol}\label{sec:app_benchmark}

This appendix records the settings behind Table~\ref{tab:benchmark_combined}.
The benchmark compares alternative spatiotemporal backbones under the same data split, losses, sensing setup, hardware, and parameter budget.
We either swap the spatial encoder--decoder while keeping ConvGRU fixed, or swap the temporal operator while keeping the CNN2D stack fixed; in both cases the three-network layout of Section~\ref{sec:methods} is unchanged.

\paragraph{Data splits.}
The ensemble contains $100$ independent crack-pattern simulations on the $256\times256$ grid of Appendix Table~\ref{tab:pf_sim}.
Splits are defined by trajectory: each full loading history belongs to training or reporting, and every current-step sample from that simulation carries the same label.
Seventy-five trajectories are used for training and twenty-five for monitoring convergence, selecting checkpoints, ranking architectures, and reporting the scores in Table~\ref{tab:benchmark_combined}.
Report-set trajectories are not used for parameter updates; the same fixed split is applied to every architecture.

\paragraph{Suites and swapped components.}
The \emph{spatial} suite replaces the CNN2D encoder--decoder while keeping the ConvGRU cell: CNN2D--ConvGRU (baseline), FNO2D--ConvGRU, DeepONet--ConvGRU, and PointNet--ConvGRU.
The \emph{temporal} suite keeps the CNN2D encoder--decoder and swaps the convolutional recurrent core for a flat GRU, LSTM, vanilla RNN, or neural ODE integrator: CNN2D--ConvGRU (baseline), CNN2D--GRU, CNN2D--LSTM, CNN2D--RNN, and CNN2D--ODE.
All candidates share the same loss functions, masks, and normalization settings (Table~\ref{tab:ml_train}).

\paragraph{Matched parameter budget.}
To keep model size from confounding the comparison, every non-baseline candidate is tuned to within $\pm5\%$ of the baseline trainable parameter count.
The reference budget includes all weights in the three prediction networks (FNO displacement refiner off, matching the production checkpoint).
The width of each alternative architecture is adjusted automatically until its parameter count matches this budget; the baseline itself is not width-tuned.

\paragraph{Evaluation metrics.}
After training, each model is scored in current-step reconstruction mode on the report trajectories only (no multi-step sequential deployment in this table).
At each report-set step the network receives the past-history window $\mathcal{T}$ and sparse current-step measurements and reconstructs the current fields; predictions are compared to the finite element target on the full $256\times256$ grid.
The phase field is scored in physical units $\phi\in[0,1]$ after subtracting the training shift.
Table~\ref{tab:benchmark_combined} reports the mean $R^2_{\phi}$ averaged over the $25$ report-set crack patterns; a model passes if $R^2_{\phi}\ge0.99$.
Wall-clock training time is recorded on a single NVIDIA L40S GPU and includes the full $100$-epoch staged schedule.

Table~\ref{tab:benchmark_protocol} collects the shared hyperparameters.
Per-simulation scores for $u_x$ and $u_y$, and epoch-wise training losses, are archived with the benchmark outputs but are not used in Table~\ref{tab:benchmark_combined}.

\begin{table}[ht]
    \centering
    \caption{Shared protocol for the architecture benchmark suites (Section~\ref{sec:results}, Table~\ref{tab:benchmark_combined}).}
    \label{tab:benchmark_protocol}
    \begin{tabular}{lll}
        \toprule
        Group & Setting & Value \\
        \midrule
        \multirow{3}{*}{\makecell[l]{Data\\ splits}}
          & Training trajectories & $75$ \\
          & Report trajectories & $25$ \\
          & Past-history length $|\mathcal{T}|$ & $10$ \\
        \midrule
        \multirow{4}{*}{Sensing}
          & Sparse sensor grid & $32\times32$ \\
          & Phase input & shifted $\phi'$ and sparse $|u|$ \\
          & Displacement input & normalized $u_x$, $u_y$, $\phi'$ history + sparse current step \\
          & Output channels & $\phi'$, normalized $u_x$, $u_y$ \\
        \midrule
        \multirow{5}{*}{Optimization}
          & Optimizer / learning rate & Adam / $10^{-3}$ \\
          & Batch size & $2$ \\
          & Total epochs & $100$ \\
          & Schedule ($\phi$ / disp / fine-tune) & $30$ / $60$ / $10$ \\
          & Physics regularization $\lambda_{\mathrm{phys}}$ & $0$ \\
        \midrule
        \multirow{3}{*}{\makecell[l]{Masks \&\\ loss}}
          & Mask type / threshold & sigmoid / $1.5$ \\
          & Softness $\beta$ / band $b$ & $0.05$ / $0.05$ \\
          & Interface weight $\gamma_{\mathrm{int}}$ & $3.0$ \\
        \midrule
        \multirow{1}{*}{Architecture}
          & Hidden channels $C_h$ & $32$ \\
        \midrule
        \multirow{1}{*}{\makecell[l]{Baseline\\ refiner}}
          & Displacement refiner (baseline) & off \\
        \midrule
        Capacity matching & Parameter budget & within $\pm5\%$ of baseline ($\approx2.52\times10^{5}$) \\
        \midrule
        Hardware & GPU & NVIDIA L40S (single device) \\
        \midrule
        \multirow{3}{*}{Evaluation}
          & Inference mode & current-step reconstruction \\
          & Primary metric & mean $R^2_{\phi}$ on report trajectories \\
          & Pass criterion & $R^2_{\phi}\ge0.99$ \\
        \bottomrule
    \end{tabular}
\end{table}

Table~\ref{tab:benchmark_protocol} fixes the protocol behind Table~\ref{tab:benchmark_combined}.
It uses a $75$/$25$ trajectory split, $|\mathcal{T}|=10$, $32\times32$ sparse sensing, Adam at $10^{-3}$, and the same $30$/$60$/$10$ staged schedule and mask settings as Table~\ref{tab:ml_train}.
Every candidate is width-matched to within $\pm5\%$ of the baseline parameter count ($\approx2.52\times10^{5}$) with the FNO displacement refiner off.
Capacity differences therefore do not explain the pass/fail outcomes.
Evaluation is current-step reconstruction on the report set only.
A model passes if mean $R^2_{\phi}\ge0.99$ on that set.
Wall-clock training time is recorded on a single NVIDIA L40S.

\section{Conditional sequential-reconstruction error estimate}
\label{app:sequential_error}

The deployed model reconstructs the current state from a history window and
an external sparse measurement. Let
\[
Z_k=(X_{k-m},\ldots,X_{k-1})
\]
denote the length-$m$ reference history at load step $k$, and let
\[
\widehat{Z}_k
=
(\widehat{X}_{k-m},\ldots,\widehat{X}_{k-1})
\]
denote the corresponding history used by the network. For the same
current-step measurement $y_k$, write
\[
X_k=S_k(Z_k,y_k),
\qquad
\widehat{X}_k=\widehat{S}_k(\widehat{Z}_k,y_k),
\]
where the maps may depend explicitly on the load index $k$.

For a history $Z=(Z^{(1)},\ldots,Z^{(m)})$, define the maximum history norm
\[
\|Z\|_{\infty,m}
=
\max_{1\leq j\leq m}\|Z^{(j)}\|.
\]

\begin{proposition}[Conditional sequential-reconstruction error estimate]
Assume that the reference and reconstructed histories remain in a set
$\mathcal{D}$ on which
\[
\|S_k(Z,y_k)-S_k(\widetilde{Z},y_k)\|
\leq
L\|Z-\widetilde{Z}\|_{\infty,m}
\]
for all admissible histories $Z,\widetilde{Z}\in\mathcal{D}$. Assume also
that the one-step approximation error satisfies
\[
\|S_k(Z,y_k)-\widehat{S}_k(Z,y_k)\|
\leq
\varepsilon_k
\]
for all $Z\in\mathcal{D}$.

Define
\[
e_k=\|X_k-\widehat{X}_k\|,
\qquad
E_k=\max_{k-m\leq j\leq k-1}e_j,
\qquad
M=\max\{1,L\}.
\]
Then
\[
e_k\leq L E_k+\varepsilon_k
\]
and
\[
E_{k+1}\leq M E_k+\varepsilon_k.
\]
Consequently, for any integer $n\geq1$,
\[
E_{k+n}
\leq
M^nE_k
+
\sum_{r=0}^{n-1}M^{n-1-r}\varepsilon_{k+r}.
\]
If the initial history is exact, $E_m=0$, and
$\varepsilon_k\leq\varepsilon$, then
\[
E_{m+n}
\leq
\begin{cases}
\displaystyle
\varepsilon\frac{M^n-1}{M-1},
& M>1,\\[8pt]
n\varepsilon,
& M=1.
\end{cases}
\]
\end{proposition}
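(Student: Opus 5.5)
The argument is a one-step triangle inequality followed by a discrete Gr\"onwall induction on the sliding-window maximum $E_k$.

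\textbf{Step 1: the one-step bound.} Starting from $X_k=S_k(Z_k,y_k)$ and $\widehat{X}_k=\widehat{S}_k(\widehat{Z}_k,y_k)$, insert the intermediate quantity $S_k(\widehat{Z}_k,y_k)$:
\[
e_k\le\|S_k(Z_k,y_k)-S_k(\widehat{Z}_k,y_k)\|+\|S_k(\widehat{Z}_k,y_k)-\widehat{S}_k(\widehat{Z}_k,y_k)\|.
\]
Both $Z_k$ and $\widehat{Z}_k$ lie in $\mathcal{D}$ by hypothesis. The first term is therefore at most $L\|Z_k-\widehat{Z}_k\|_{\infty,m}$. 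The second is at most $\varepsilon_k$, since the approximation hypothesis is applied at $\widehat{Z}_k\in\mathcal{D}$. By the definition of the history norm, $\|Z_k-\widehat{Z}_k\|_{\infty,m}=\max_{k-m\le j\le k-1}e_j=E_k$. This gives $e_k\le LE_k+\varepsilon_k$.

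The only subtlety is which history the approximation hypothesis is evaluated on. Choosing the intermediate point $S_k(\widehat{Z}_k,y_k)$ rather than $\widehat{S}_k(Z_k,y_k)$ means $\widehat{S}_k$ never needs a Lipschitz property of its own. Only $S_k$ must be Lipschitz, and the approximation error is uniform on $\mathcal{D}$.

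\textbf{Step 2: the window recursion.} By definition, $E_{k+1}=\max(e_{k-m+1},\ldots,e_{k-1},e_k)$. Each of $e_{k-m+1},\dots,e_{k-1}$ is at most $E_k\le ME_k+\varepsilon_k$, because $M\ge1$ and $\varepsilon_k\ge0$. The new entry satisfies $e_k\le LE_k+\varepsilon_k\le ME_k+\varepsilon_k$ by Step 1. Hence $E_{k+1}\le ME_k+\varepsilon_k$. This is the only place $M=\max\{1,L\}$ is needed: the entries retained from the old window cannot be bounded by $LE_k$ when $L<1$.

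\textbf{Step 3: unrolling.} Induct on $n$. The case $n=1$ is Step 2. Assuming the bound for $n$, apply Step 2 at index $k+n$ and use $M\ge0$:
\[
E_{k+n+1}\le M\Bigl(M^nE_k+\sum_{r=0}^{n-1}M^{n-1-r}\varepsilon_{k+r}\Bigr)+\varepsilon_{k+n}.
\]
The right-hand side equals the claimed expression with $n+1$ in place of $n$.

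\textbf{Step 4: the specialized bound.} Set $k=m$, $E_m=0$ and $\varepsilon_k\le\varepsilon$. The bound becomes $\varepsilon\sum_{r=0}^{n-1}M^{r}$. This geometric sum equals $(M^n-1)/(M-1)$ when $M>1$ and $n$ when $M=1$.

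The argument applies the hypotheses at every step, so it tacitly assumes that the histories $Z_j,\widehat{Z}_j$ stay in $\mathcal{D}$ for all indices involved. This is exactly the standing assumption of the proposition, and I will state it explicitly rather than prove it.
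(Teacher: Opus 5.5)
Your proposal is correct and follows essentially the same argument as the paper. Both use the triangle inequality through the intermediate point $S_k(\widehat{Z}_k,y_k)$, the sliding-window bound $E_{k+1}\le\max\{E_k,e_k\}\le ME_k+\varepsilon_k$, iteration of that recurrence, and the geometric sum; your explicit induction and your remarks on where $M\ge1$ and $\varepsilon_k\ge0$ enter only spell out steps the paper leaves implicit.
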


\begin{proof}
Using the triangle inequality and the two assumptions,
\[
\begin{aligned}
e_k
&=
\|S_k(Z_k,y_k)-\widehat{S}_k(\widehat{Z}_k,y_k)\|\\
&\leq
\|S_k(Z_k,y_k)-S_k(\widehat{Z}_k,y_k)\|
+
\|S_k(\widehat{Z}_k,y_k)
-\widehat{S}_k(\widehat{Z}_k,y_k)\|\\
&\leq
L\|Z_k-\widehat{Z}_k\|_{\infty,m}
+\varepsilon_k\\
&=
LE_k+\varepsilon_k.
\end{aligned}
\]
After the newly reconstructed state is appended and the oldest state is
removed, the next history error satisfies
\[
E_{k+1}
=
\max_{k-m+1\leq j\leq k}e_j
\leq
\max\{E_k,e_k\}.
\]
Therefore,
\[
E_{k+1}
\leq
\max\{E_k,LE_k+\varepsilon_k\}
\leq
ME_k+\varepsilon_k,
\]
where $M=\max\{1,L\}$. Iterating this recurrence gives
\[
E_{k+n}
\leq
M^nE_k
+
\sum_{r=0}^{n-1}M^{n-1-r}\varepsilon_{k+r}.
\]
The uniform-error result follows by evaluating the geometric sum.
\end{proof}

This estimate is conditional and is used only to illustrate how errors in
the complete history window may propagate during sequential reconstruction.
It does not establish global Lipschitz continuity of the phase-field fracture
evolution or of the trained network.

\printbibliography
\end{document}